\documentclass[pre,aps,twocolumn,,amsmath,amssymb,showpacs,showkeys]{revtex4-1}
\usepackage{graphicx}
\usepackage{amsmath,amssymb}
\usepackage{amsfonts}
\usepackage{mathrsfs}
\usepackage{epsfig}
\usepackage{dcolumn}
\usepackage{enumerate}
\usepackage{amsthm}
\usepackage{color}
\usepackage{xfrac}
\begin{document}

\newcommand{\newc}{\newcommand}
\newtheorem{proposition}{Proposition}
\newc{\beq}{\begin{equation}}
\newc{\eeq}{\end{equation}}
\newc{\kt}{\rangle}
\newc{\br}{\langle}
\newc{\beqa}{\begin{eqnarray}}
\newc{\eeqa}{\end{eqnarray}}
\newc{\pr}{\prime}
\newc{\longra}{\longrightarrow}
\newc{\ot}{\otimes}
\newc{\rarrow}{\rightarrow}
\newc{\h}{\hat}
\newc{\bom}{\boldmath}
\newc{\btd}{\bigtriangledown}
\newc{\al}{\alpha}
\newc{\be}{\beta}
\newc{\ld}{\lambda}
\newc{\sg}{\sigma}
\newc{\p}{\psi}
\newc{\eps}{\epsilon}
\newc{\om}{\omega}
\newc{\mb}{\mbox}
\newc{\tm}{\times}
\newc{\hu}{\hat{u}}
\newc{\hv}{\hat{v}}
\newc{\hk}{\hat{K}}
\newc{\ra}{\rightarrow}
\newc{\non}{\nonumber}
\newc{\ul}{\underline}
\newc{\hs}{\hspace}
\newc{\longla}{\longleftarrow}
\newc{\ts}{\textstyle}
\newc{\f}{\frac}
\newc{\df}{\dfrac}
\newc{\ovl}{\overline}
\newc{\bc}{\begin{center}}
\newc{\ec}{\end{center}}
\newc{\dg}{\dagger}
\newc{\prh}{\mbox{PR}_H}
\newc{\prq}{\mbox{PR}_q}
\newc{\tr}{\mbox{tr}}
\newc{\pd}{\partial}
\newc{\qv}{\vec{q}}
\newc{\pv}{\vec{p}}
\newc{\dqv}{\delta\vec{q}}
\newc{\dpv}{\delta\vec{p}}
\newc{\mbq}{\mathbf{q}}
\newc{\mbqp}{\mathbf{q'}}
\newc{\mbpp}{\mathbf{p'}}
\newc{\mbp}{\mathbf{p}}
\newc{\mbn}{\mathbf{\nabla}}
\newc{\dmbq}{\delta \mbq}
\newc{\dmbp}{\delta \mbp}
\newc{\T}{\mathsf{T}}
\newc{\J}{\mathsf{J}}
\newc{\sfL}{\mathsf{L}}
\newc{\C}{\mathsf{C}}
\newc{\B}{\mathsf{M}}
\newc{\V}{\mathsf{V}}

\title{Simple permutation-based measure of quantum correlations and maximally-3-tangled states}

\author{Udaysinh T. Bhosale}
\email{udaybhosale0786@gmail.com}
\affiliation{Indian Institute of Science Education and Research, Dr. Homi Bhabha Road, Pashan, Pune, 411 008, India.}

\author{Arul Lakshminarayan}
\email{arul@physics.iitm.ac.in}
\affiliation{Max-Planck-Institut f\"ur Physik komplexer Systeme, N\"othnitzer Stra\ss{}e 38, 01187 Dresden, Germany}
\affiliation{Department of Physics, Indian Institute of Technology Madras, Chennai, India~600036}

\date{\today}

\begin{abstract}

Quantities invariant under local unitary transformations are of natural interest in the study of entanglement. 
This paper deduces and studies a particularly simple quantity that is constructed from a combination of two 
standard permutations of the density matrix, namely realignment and partial transpose. This bipartite quantity, 
denoted here as $R_{12}$, vanishes on large classes of separable states including classical-quantum correlated states,
while being maximum for only maximally entangled states. It is shown to be naturally related to the 3-tangle in
three qubit states via their two-qubit reduced density matrices. Upper and lower bounds on concurrence and negativity of 
two-qubit density matrices for all ranks are given in terms of $R_{12}$. Ansatz states satisfying these bounds are given 
and verified using various numerical methods. In rank-2 case it is shown that the states satisfying the lower bound on $R_{12}$ {\it vs} concurrence 
define a class of three qubit states that maximizes the tripartite entanglement (the 3-tangle) given an amount 
of entanglement between a pair of them. The measure $R_{12}$ is conjectured, via numerical sampling, to be 
always larger than the concurrence and negativity. In particular this is shown to be true for the physically 
interesting case of $X$ states.

\end{abstract}
 \pacs{03.67.Bg, 03.67.Mn, 03.65.Aa}

\maketitle

\section{Introduction}
\label{sec:Introduction}

Quantum entanglement, the nonlocal and unique feature of quantum mechanics, has been extensively investigated 
in the recent past, and forms an important part of quantum information theory \cite{Horodeckirpm}. 
In particular, shared bipartite entanglement is a crucial resource for many quantum information 
tasks such as teleportation \cite{Teleport}, quantum cryptography \cite{BennettProceedings}, entanglement swapping 
\cite{Horne93,Vedral98}, remote state preparation \cite{CharlesBennett01}, dense coding \cite{Superdense},
channel discrimination \cite{Pianiwatrous} and quantum repeaters \cite{Briegel99}.
Given a quantum state of many particles, say $\rho$, quantifying its entanglement content is naturally important. 
{This question has been settled in favor of the von Neumann entropy of the reduced density matrices in the case of pure bipartite states \cite{Bennett96}, as it quantifies the entanglement that can be concentrated using local operations alone.}  In the case of two-qubit states, mixed or pure,  ``concurrence"   is used, as it was shown 
to be a monotonic function of the entanglement of formation \cite{Wootters01,wootters98,Wooters}.

The partial transpose (PT) introduced by Peres \cite{peres96} is a powerful and simple tool to detect entanglement in mixed bipartite states.  However, while positivity under partial transpose is necessary for separability, it fails to detect a class of entangled states the so-called bound entangled states \cite{mhorodeckibound}. 
Nevertheless the logarithmic negativity \cite{logneg} measure based on the partial transpose is a useful measure of entanglement in mixed states. More general measures of
quantum correlation, such as the quantum discord \cite{ZurekQuantumDiscord2001,ZurekQuantumDiscord2000} have been extensively studied as well. These measures of correlations can be nonzero for states that have no entanglement content. 

{The purpose of the present work is to focus on a measure that uses a simple permutation of the density matrix. Given the multitude of entanglement and correlation measures, this work seeks to highlight some of the unique properties of this quantity such as its vanishing on classically correlated states and its natural relationship  with concurrence and the three-tangle. It is seen as a natural quantity when considering the problem of maximizing 3-tangle of three qubit states given a fixed entanglement between two of them. This gives rise to an interesting set of states that we refer to as maximally 3-tangled states. Thus it is likely that this measure, which can be easily calculated for any state, has physical content that deserves further exploration.} 

{Any measure of entanglement can not increase under local operations and classical communications 
(LOCC) and it should be constant and minimal on all separable states \cite{Plenio07,Horodeckirpm}. This also implies that the measure of entanglement must be invariant under local unitary 
(LU) transformations. 
}
The spectra of the density matrix itself and the various reduced density matrices got by tracing out subsystems are such LU invariants. They could be invariant under non-local operations and therefore their interpretation in terms of entanglement is generally not tenable.
However, interestingly, if the single subsystem reduced density matrices of a multipartite pure state are considered, collection of their eigenvectors form convex polytopes that characterize distinct entanglement classes \cite{AdamSawicki12,MichaelWalter2013,AdamSawicki14}.  
Here, the notion of entanglement class is a broader class than  LU, and includes measurements and classical 
communications, that are included in the operation known as stochastic local operations and classical 
communications (SLOCC). It is clear that states that cannot be converted to each other by LU are also not SLOCC equivalent, but the converse is not true. States that can be converted to each other by 
SLOCC are {from} 
an entanglement class. In the case of three qubit pure states there are 2 different entanglement classes, known as the $W$ and the GHZ, while for 4 qubits there are nine \cite{FVerstraete2002}.

Given a bipartite system $1$ and $2$ having a product orthonormal basis
\{$|i\rangle |\alpha\rangle$\} and density matrix $\rho_{12}$, the PT with respect to the second subsystem, denoted as $\rho_{12}^{T_2}$, is given by the matrix elements:
\begin{equation}
(\rho_{12}^{T_2})_{i\alpha;j\beta} = (\rho_{12})_{i\beta;j\alpha}\;;\;\;\; (\rho_{12})_{i\alpha;j\beta}= 
\langle i|\langle \alpha|\rho_{12}|j\rangle|\beta \rangle.
\end{equation}
Peres's partial transpose criterion states that if $\rho_{12}^{T_2}$ 
is negative then the state $\rho_{12}$ is entangled. 
The other operation of interest to the present work is realignment \cite{Chen03,Oliver04}.
The corresponding operation on the density matrix $\rho_{12}$, denoted as 
$\mathcal{R}(\rho_{12})$ is given by:
\beq
\langle i|\langle j|\left(\mathcal{R}(\rho_{12})\right)|\alpha\rangle|\beta \rangle= \langle i|\langle \alpha|\rho_{12}|j\rangle|\beta \rangle.
\label{Eq:RealDef}
\eeq
The realignment criterion is that if the state $\rho_{12}$ is separable then $\Vert \mathcal{R}(\rho_{12})\Vert_1 \le1$, where 
$\Vert M \Vert_1$ is the  trace norm equal to $\mbox{tr}\sqrt{M M^{\dagger}}$ \cite{Oliver03}. 
This condition is found to detect some bound entangled states, these being positive under PT and hence not being detected by the corresponding criterion \cite{Oliver04,mhorodeckibound}. 
Note that both the realignment and partial transpose are simple 
permutations of the elements of the density matrix. While the partial transpose retains the Hermiticity of the operator, 
realignment does not. In fact if the subsystems are of different dimensionalities, it results in a rectangular matrix.

Consider a system consisting of $M$ subsystems labeled by $i$, ($1 \le i \le M$) each in a Hilbert space of dimension $d_i$. If its joint state is $\rho$ it was shown in \cite{Udaysinh13}, motivated by considerations in \cite{Williamson11}, that LU invariants could be constructed in the following way. Let $\{a\}=(i_1,i_2,\cdots, i_K)$ be an arbitrary {\it closed} path in the space of labels of the subsystems, that is $1 \le i_k \le M$ and $i_{K+1}=i_1$. Then the eigenvalues of 
\beq
{\mathcal P}\left(\{a\}\right)={\mathcal R}\left(\rho_{i_1 i_K}^{T_{i_{K}}}\right)
\cdots {\mathcal R}\left(\rho_{i_2 i_1}^{T_{i_1}}\right)
\eeq
are (in general complex) LU invariants. Here, $\rho_{i_k i_m}$ is a bipartite state obtained by tracing out all other subsystems except $i_k$ and $i_m$. It was also shown in \cite{Udaysinh13} that the characteristic polynomial of ${\mathcal P}$ is real and hence these real coefficients are also LU invariants.
Note that the ``link transformation'' \cite{Udaysinh13,Williamson11} is a combination of both PT and realignment, executed in that order.

In the same work \cite{Udaysinh13}, it was shown that for the case of bipartite pure state of two qubits
the quantity $\det[\mathcal{R}(\rho_{12}^{T_2})\mathcal{R}(\rho_{21}^{T_1})]^{1/4}$ is equal to 
$\tau_2/4$, where $\tau_2$ is the two-tangle \cite{Coffman}, {\it i.e.} square of the concurrence 
\cite{Wooters,wootters98,Wootters01}. This motivates the question of the meaning of the spectra of $\mathcal{R}(\rho_{12}^{T_2})$ as entanglement or correlation measures in an arbitrary bipartite system. Using this,  a new and central quantity of this paper is defined and investigated in the case of two-qubit density matrices of general rank.

The structure of the paper is as follows:
In Sec.~\ref{sec:SimpleLUI} the measure is defined and basic properties are studied, along with 
its evaluation for various classes of states. In Sec.~\ref{sec:ThreeQubitPureStatesChp5} two-qubit states of rank-2 or
equivalently three-qubit pure states are studied in detail. Bounds for the concurrence are given in terms of the defined measure. Various boundaries of the inequality are characterized and two classes of maximally 3-tangled states are discussed. The difference between the concurrence and the measure is shown to have direct connections with the tripartite measure of 3-tangle. In Sec.~\ref{sec:2qubithigherrank} results on two-qubit states of rank greater than two are presented. Rank-3 and rank-4 boundaries are also investigated and all of these are found to be Bell-diagonal states.
The measure is also evaluated for a special class of states namely the $X$ states and is shown to be larger than the concurrence. In Sec.~\ref{sec:ComR12N12} negativity is compared with $R_{12}$ and again various boundaries are investigated and their significance pointed out when we can.
For example in this case the MEMS I states lie on the boundary for rank-2 states. The Werner states and pure states form common outer boundaries in both the $R_{12}$-concurrence and $R_{12}$-negativity comparisons.

\section{A simple LU invariant from realignment and partial transpose}
\label{sec:SimpleLUI}
 
As discussed above, products of certain permutations  of bipartite density matrices along a path in the space of labels are capable of generating LU invariants. This paper is mainly devoted to exploring the simplest of these, namely when the path simply connects two subsystems: say $1 \rarrow 2 \rarrow 1$. The operator in this case is 
${\mathcal P}(12)={\mathcal R}(\rho_{1 2}^{T_{2}}){\mathcal R}(\rho_{2 1}^{T_{1}})= {\mathcal R}(\rho_{1 2}^{T_{2}}){\mathcal R}(\rho_{1 2}^{T_{2}})^{\dagger}$, and hence is positive. While all the eigenvalues or the coefficients of the characteristic polynomial of ${\mathcal P}(12)$ maybe considered, in this paper the quantity 
\beq
R_{12}=d \, \left(\det \left[{\mathcal P}(12)\right]\right)^{1/2d^2}=  d \, \left(\left|\det[\mathcal{R}(\rho_{12}^{T_2})] \right|\right)^{1/d^2},
\label{eq:R12}
\eeq
is studied. In particular the case $d_1=d_2=d$, is considered so that the array $\mathcal{R}(\rho_{12}^{T_2})$ is square. Else, straightforward generalized forms need to be used. The somewhat strange powers is to make contact with the well-known entanglement measure of concurrence when $d=2$, a case that we will almost exclusively consider. From the definition it is obvious that $R_{12}/d$ is the geometric mean of the singular values of $\mathcal{R}(\rho_{12}^{T_2})$.

{
This quantity contains the entanglement along with other correlations that may come from multipartite 
entanglement of purifications of $\rho_{12}$. 
Various evidences for this fact will be presented in the subsequent 
parts of the paper. To be precise it is shown that for various class of states
$R_{12}$ exceeds or equals well known measures of entanglement which shows that the term $\rho_{12}$ captures
other correlations along with entanglement.
It is interesting to note that in the case of two qubits, $R_{12}$ is equal to the
volume of a steering ellipsoid \cite{TerryRudolph2014,AntonyMilne2014,AntonyMilne2014BB} also 
called ``obesity" which arises in the quantum steering ellipsoid
formalism of two-qubit states.
The quantum steering ellipsoid of a two-qubit state is defined as the set of Bloch vectors that Bob can collapse
Alice’s qubit to, considering all possible measurements on his qubit.
This formalism has provided a faithful and intuitive representation of two-qubit states.
This observation gives an operational meaning to the term $R_{12}$ that needs further exploration.

}
The letter $R$ is used to signify this quantity and maybe 
considered as some sort of ``rapprochement" between the two subsystems. As the partial transpose followed by 
realignment is repeatedly done in the following, it is useful to show their combined operation explicitly in the 
case of a two qubit state:
\beq
\begin{split}
\rho&=\left( \begin{array}{cccc} a_{11}&a_{12}&a_{13}&a_{14}\\a_{12}^*&a_{22}&a_{23}&a_{24}\\a_{13}^*&a_{23}^*&a_{33}&a_{34}\\a_{14}^*&a_{24}^*&a_{34}^*&a_{44}\end{array} \right) \mapsto \\ {\mathcal R}(\rho^{T_2}) &=
\left( \begin{array}{cccc} a_{11}&a_{12}^*&a_{12}&a_{22}\\a_{13}&a_{23}&a_{14}&a_{24}\\a_{13}^*&a_{14}^*&a_{23}^*&a_{24}^*\\a_{33}&a_{34}^*&a_{34}&a_{44}\end{array} \right). 
\end{split}
\eeq

{
For product states of the form 
$\rho_{12}=\rho_1 \otimes \rho_2$, which is the special case of separable states,  ${\mathcal R}(\rho_{12}^{T_2})$ is a rank-1 projector.
 The crucial observation is that for such product states 
\beq \label{eq:R12ProductState}
{\mathcal R}(\rho_{12}^{T_2})=\rho_1^R  (\rho_2^R)^{\dagger}, 
\eeq 
where the notation $\rho_{1,2}^R$ denotes the density matrix reshaped into a column vector of dimension $d_{1,2}^2$
and $(\rho_{1,2}^R)^{\dagger}$ denotes the Hermitian conjugate of $\rho_{1,2}^R$. 
The reshaping is done by stacking the rows in a column. Therefore, ${\mathcal R}(\rho_{12}^{T_2})$ is a matrix 
of dimension $d_1^2 \times d_2^2$. To understand this more clearly an example of two qubits will be considered 
explicitly. Let $\rho_1$ and $\rho_1$ denote the density matrices of qubits $1$ and $2$ respectively as follows:
\beq
\rho_1=\left( \begin{array}{cc}
	      a_{11}   & a_{12}\\
	      a_{12}^* & a_{22} \\
\end{array}\right)\;\;\;\mbox{and}\;\;\;
\rho_2=\left( \begin{array}{cc}
	      b_{11}   & b_{12}\\
	      b_{12}^* & b_{22} \\
\end{array}\right).
\eeq 
Then, $\rho_{1,2}^R$ are given as follows:
\beq
\rho_1^R=\left( \begin{array}{c}
	      a_{11}\\ 
	      a_{12}\\
	      a_{12}^* \\
	      a_{22} \\
\end{array}\right)
\;\;\;\mbox{and}\;\;\;
\rho_2^R=\left( \begin{array}{c}
	      b_{11}   \\
	      b_{12}\\
	      b_{12}^* \\
	      b_{22} \\
\end{array}\right)
\eeq 
Using Eq.~(\ref{eq:R12ProductState}) one obtains the following:
This implies that 
\beq
\rho_1^R  (\rho_2^R)^{\dagger}=
\left( \begin{array}{cccc}
	      a_{11}b_{11} & a_{11}b_{12}^*  & a_{11}b_{12} & a_{11}b_{22}\\ 
	      a_{12}b_{11} & a_{12}b_{12}^*  & a_{12}b_{12} & a_{12}b_{22}\\ 
	      a_{12}^*b_{11} & a_{12}^*b_{12}^*  & a_{12}^*b_{12} & a_{12}^*b_{22}\\ 
	      a_{22}b_{11} & a_{22}b_{12}^*  & a_{22}b_{12} & a_{2}b_{22}\\ 
\end{array}\right)
\eeq
and is easily seen to be ${\mathcal R}(\rho_{1}\otimes \rho_{2}^{T})$.
}

For two-qubit pure states it is easy to see that \cite{Udaysinh13} $R_{12}=C_{12}$, where $C_{12}$ is the 
concurrence \cite{Wooters,wootters98,Wootters01}. Thus, this motivates a more detailed study of this quantity in 
the case of higher rank two-qubit density matrices and its relationship to the concurrence. 
$R_{12}$ is a symmetric measure, {\it i.e.} $R_{12} = R_{21}$, when
the subsystems have equal dimensions. This follows from the definition, $R_{12}$ depends on the eigenvalues of  
${\mathcal R}(\rho_{1 2}^{T_{2}}){\mathcal R}(\rho_{21}^{T_{1}})$, while 
 $R_{21}$ depends on the eigenvalues of  
${\mathcal R}(\rho_{2 1}^{T_{1}}){\mathcal R}(\rho_{12}^{T_{2}})$. These two sets of eigenvalues can differ only
in the number of zero eigenvalues, which happens when the subsystem dimensions are different. 
Hence we may define $R_{12}$ to be such that subsystem labeled $1$ is not of larger dimension than that of subsystem $2$.

The following proposition is now proved:
\begin{proposition}{ $0 \le R_{12} \le 1.$} 
\label{prop:Rleq1}
\end{proposition}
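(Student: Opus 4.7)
The lower bound $R_{12}\ge 0$ is immediate from the definition in Eq.~(\ref{eq:R12}): $d>0$ and the absolute value of a determinant is non-negative. So the content of the statement is the upper bound $R_{12}\le 1$.

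The plan for the upper bound is to use the interpretation, stated right after Eq.~(\ref{eq:R12}), that $R_{12}/d$ is the geometric mean of the $d^{2}$ singular values $\sigma_{1},\dots,\sigma_{d^{2}}$ of the square matrix $\mathcal{R}(\rho_{12}^{T_{2}})$. Applying the AM--GM inequality to $\sigma_{i}^{2}$ gives
\[
(R_{12}/d)^{2}\;=\;\Bigl(\prod_{i}\sigma_{i}^{2}\Bigr)^{1/d^{2}}\;\le\;\frac{1}{d^{2}}\sum_{i}\sigma_{i}^{2}\;=\;\frac{\|\mathcal{R}(\rho_{12}^{T_{2}})\|_{F}^{2}}{d^{2}},
\]
so it suffices to establish $\|\mathcal{R}(\rho_{12}^{T_{2}})\|_{F}^{2}\le 1$.

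The key observation is that both the partial transpose $T_{2}$ and the realignment $\mathcal{R}$ are merely permutations of the entries of $\rho_{12}$ --- which is manifest from the defining equations of the two operations and from the explicit four-by-four array displayed just above. Permutations of matrix entries leave the Hilbert--Schmidt norm invariant, so $\|\mathcal{R}(\rho_{12}^{T_{2}})\|_{F}^{2}=\|\rho_{12}\|_{F}^{2}=\mathrm{tr}(\rho_{12}^{2})$. The right-hand side is the purity of a density matrix and satisfies the standard bound $\mathrm{tr}(\rho_{12}^{2})\le 1$, with equality only for pure states. Substituting into the previous display yields $R_{12}^{2}\le 1$, hence $R_{12}\le 1$.

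I do not foresee any genuine obstacle; the proof is essentially AM--GM plus an observation about permutations. The one thing worth double-checking is that the exponent $1/d^{2}$ in~(\ref{eq:R12}) correctly pairs with the $d^{2}$-fold geometric mean, so that AM--GM is applied with the right count. As a byproduct, tracking equality through the chain forces $\mathrm{tr}(\rho_{12}^{2})=1$ together with all $\sigma_{i}$ equal to $1/d$; thus $R_{12}=1$ is saturated only by maximally entangled pure states, consistent with the property announced in the abstract.
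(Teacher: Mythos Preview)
Your proof is correct and follows essentially the same route as the paper: both use that $R_{12}/d$ is the geometric mean of the singular values of $\mathcal{R}(\rho_{12}^{T_2})$, that permuting entries preserves the Frobenius norm so that $\sum_j \mu_j=\mbox{tr}(\rho_{12}^2)\le 1$, and the AM--GM inequality. Your version is slightly more streamlined: by applying AM--GM to the $\sigma_i^{2}=\mu_i$ rather than to the $\sigma_i=\sqrt{\mu_i}$, you bypass the separate Cauchy--Schwarz step the paper invokes to pass from $\sum_j\sqrt{\mu_j}$ to $\sum_j\mu_j$.
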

\begin{proof}
Let the eigenvalues of ${\mathcal P}(12)={\mathcal R}(\rho_{1 2}^{T_{2}}){\mathcal R}(\rho_{1 2}^{T_{2}})^{\dagger}$, be $\mu_j$, $1 \le j \le d^2$. Then
\beq
R_{12}=d \left(\prod_{j=1}^{d^2} \sqrt{\mu_j}\right)^{1/d^2} \le \frac{1}{d} \sum_{j=1}^{d^2}\sqrt{\mu_j},
\eeq
which follows from the fact that geometric mean is no larger than the arithmetic mean. As ${\mathcal R}(\rho_{1 2}^{T_{2}})$ is only a permutation of the original density matrix, it follows that $\tr ({\mathcal P}(12))=\sum_{j=1}^{d^2} \mu_j=\tr (\rho_{12}^2)\le 1.$ An application of the Cauchy-Schwarz inequality: $\sum_i a_i b_i \le \sqrt{\sum_ i a_i^2} \sqrt{\sum_i b_i^2}$, with $a_i=\sqrt{\mu_i}$, $b_i=1$ gives  $\sum_{j=1}^{d^2}\sqrt{\mu_j} \le d$, which results in $R_{12} \le 1$ as required. The lower limit is evident.
\end{proof}

It is instructive to evaluate $R_{12}$ for well-known classes of states and therefore the following examples 
are considered.
\begin{enumerate}

\item {\it Bipartite pure states:}
Using Schmidt decomposition every bipartite pure state can always be written as:
$|\psi_{12}\rangle=\sum_{k=1}^{d}\sqrt{\lambda_k}|\phi_k\rangle|\psi_k\rangle$ where $d=\mbox{min}\{d_1,d_2\}$. 
It follows that ${\mathcal R}(|\psi_{12} \kt \br \psi_{12} |^{T_{2}})=$
\beq
 \sum_{k,j}\sqrt{\lambda_j \lambda_k} |\phi_j \kt |\phi_k\kt \br \psi_k|\br \psi_j|.
\eeq
The eigenvalues of ${\mathcal P}(12)$  are then  
$\lambda_k \lambda_j$ for $k$, $j=1,2,\ldots d$ which gives \[ R_{12}=d\,\left(\prod_k^d \lambda_k\right)^{1/d}. \]
For maximally entangled states $\lambda_k=1/d$  $\forall$  $k$, and it follows that $R_{12}=1$.
This also follows from the fact that the maximally entangled state is $\sum_{j=1}^d |jj \kt/\sqrt{d}$,
gives ${\mathcal R}(\rho_{12}^{T_2})=S_{12}/d$, where $S_{12}$ is the swap 
operator $S_{12}|i j\kt = |ji\kt$. 

\item {\it Bell-diagonal states:} These states, as the name suggests, are diagonal in the
Bell basis \cite{Wootersentform,Verstraete2001}: 
\[ \begin{split} \rho_{12}=&p_1|\phi^+\rangle\langle\phi^+|+p_2|\psi^+\rangle\langle\psi^+|\\&+p_3|\psi^+\rangle\langle\psi^+|+
p_1|\phi^-\rangle\langle\phi^-| \end{split} \] where $\sum_i^4 p_i=1$, and $|\psi^{\pm}\rangle=(1/\sqrt{2})(|01\rangle+|10\rangle)$ 
while $|\phi^{\pm}\rangle=(1/\sqrt{2})(|00\rangle+|11\rangle)$. This state is separable iff it's spectrum lies in 
$[0,1/2 ]$ \cite{RyszardHorodecki1996}, otherwise it is entangled. The entanglement calculated using the
concurrence is $C_{12}=\mbox{max}\{0,2p_{max}-1\}$ \cite{Wooters} where 
$p_{\mbox{max}}=\mbox{max}\{p_1,p_2,p_3,p_4\}$. It is a simple calculation to show that 
\begin{eqnarray*}
&&\;\;\;\;\;\;\;\;\;R_{12}=\\ \nonumber
&&\;\;\;\;\;\;\;\;\;\left|8 (p_2+p_3-\sfrac{1}{2}) (p_2+p_4-\sfrac{1}{2}) (p_3+p_4-\sfrac{1}{2})\right|^{1/4}\nonumber
\end{eqnarray*}
%
and is nonzero even when concurrence is zero.
In fact $R_{12}$ is zero for any pair of $p_i$ and $p_j$ ($i \neq j$) satisfying $p_i+p_j=1/2$. Bell-diagonal states
appear as boundaries in many of the following phase diagrams, and one special case of it, the Werner state is worth
singling out for further details. 

\item {\it Werner state:} A well-known mixture of the maximally entangled and mixed state is the two-qubit Werner 
state: $\rho_{12}=(1-p) I /4+p \, |\phi^{+}\rangle \langle \phi^{+}|$ \cite{WernerState1989}
 where $p$ ($0\leq p \leq 1$). This state is entangled iff $1/3 \leq p \leq 1$ and in that case the entanglement, 
as measured by the concurrence is $C_{12}=(3p-1)/2$. It is readily seen however that $R_{12}=p^{3/4}$, and hence 
is nonzero when the concurrence is zero, except in the extreme case of $p=0$, when there is a maximally mixed 
state. It is easy to see that $p^{3/4}-(3p-1)/2$ is monotonically decreasing in $[1/3,1]$ and hence attains the 
minimum value of $0$ at $p=1$. This implies that $R_{12}\geq C_{12}$ for all values of $p$, equality occurring 
only at the extreme cases of $p=0$ and $1$.

{
\item {\it Maximally Entangled Mixed States (MEMS):} 
These states
\cite{SatoshiIshizaka2000,Verstraete2001,WJMunro01,TzuChiehWei03,NicholasPeters04} are  
two-qubit states whose entanglement (concurrence) is maximized for a given value of mixedness,
measured using linear entropy. These states have been realized experimentally 
\cite{NicholasPeters04} using correlated photons from parametric down-conversion.
There are two classes of MEMS,  the rank-2 (MEMS~I) and rank-3 (MEMS~II) ones and are given as 
follows:
\begin{equation}
\begin{split}
\rho_{MEMS I} &=\left( \begin{array}{cccc}
    C_{12}/2 & 0 &  0 & C_{12}/2\\
	      0 & 1-C_{12} & 0 & 0\\
	      0 & 0 & 0 & 0 \\     
	      C_{12}/2 & 0 & 0 & C_{12}/2\\
\end{array}\right),\\
&  \mbox{where}\;\;\;\dfrac{2}{3} \leq C_{12} \leq 1,
\end{split}
\end{equation}
and
\begin{equation}
\begin{split}
\rho_{MEMS II}&=\left( \begin{array}{cccc}
    1/3 & 0 &  0 & C_{12}/2\\
	      0 & 1/3 & 0 & 0\\
	      0 & 0 & 0 & 0 \\     
	      C_{12}/2 & 0 & 0 & 1/3\\
\end{array}\right),\\
&  \mbox{where}\;\;\; 0 \leq C_{12} \leq \dfrac{2}{3} \;\;\; \mbox{respectively}.
\end{split}
\end{equation}
It readily follows from the definition in Eq.~(\ref{eq:R12}) that $R_{12}=C_{12}$ for MEMS I and 
$R_{12}=\sqrt{2C_{12}/3}$ for MEMS II for the respective ranges of $C_{12}$.
It is easy to see $\sqrt{2C_{12}/3}-C_{12}\geq 0$ in the range $0 \leq C_{12} \leq 2/3$, equality
occurring only at $C_{12}=0$ and $2/3$. This implies that $R_{12}\geq C_{12}$ for MEMS. 
We will see below that this inequality is of general validity.
}

\item {\it Separable states:}

Consider first product states of the form $\rho_{12}=\rho_1 \otimes \rho_2$.
Using Eq.~(\ref{eq:R12ProductState}) one obtains the following:
\beq
\begin{split}
{\mathcal P}(12)&={\mathcal R}(\rho_{1 2}^{T_{2}}){\mathcal R}(\rho_{2 1}^{T_{1}})
={\mathcal R}(\rho_{1 2}^{T_{2}}){\mathcal R}(\rho_{1 2}^{T_{2}})^{\dagger}\\&=
\rho_1^R(\rho_2^R)^{\dagger} \rho_2^R  (\rho_1^R)^{\dagger}=
\mbox{tr}(\rho_2^2) \rho_1^R (\rho_1^R)^\dagger
\end{split}
\eeq 
which is of rank-1 and hence $R_{12}=0$. 

For states of the form $\rho_{12}=\sum_{k=1}^M p_k \rho_{1k} \otimes \rho_{2k}$, a similar calculation yields
\beq
\label{eq:SepR12}
{\mathcal P}(12)= \sum_{k,l=1}^{M} p_k p_l \, \mbox{tr}(\rho_{2k} \, \rho_{2l}) \rho_{1k}^R (\rho_{1l}^R)^\dagger.
\eeq
This cannot be of full rank if $M<d_1^2$, and hence $R_{12}=0$ in this case as well.

\item  {\it Classical-Quantum correlated states:} These are of the form 
\cite{Piani2014,ShunlongLuo2008,KavanModi2014}
\beq
\label{ClassicalQState}
\rho_{CQ}=\sum_{i} p_{i}\, |i\kt \br i| \otimes \rho_i,
\eeq
where $\{|i\kt \}$ are orthonormal and $\rho_i$ are arbitrary states. In this and the next example we use subscripts $C$ and $Q$
for the subsystems so as to make the classical and quantum labels explicit.
As a special case of separable states with $M \leq d_C < d_C^2$ it follows from Eq. ~(\ref{eq:SepR12}) that $R_{CQ}=0$. 
Alternatively it is straightforward to verify that 
\[ {\mathcal P}(CQ)=\sum_{ij} \mbox{tr}(\rho_i \rho_j)   p_{i} p_{j} |ii\kt \br jj|, \]
and hence there are at least $d_C^2-d_C$ vanishing eigenvalues of  ${\mathcal P}(CQ)$,
implying that $R_{CQ}=0$. 
{
It should be noted that this is true irrespective of whether the dimension of the classical subsystem is 
greater or less than that of the quantum.}

{
\item  {\it Quantum-Classical correlated states:} From the symmetry property of $R$ it may appear that we
can conclude that it vanishes also for Quantum-Classical correlated states where the orthonormal projectors are
in the second subspace. The nonzero eigenvalues of ${\mathcal P}(QC)$ are the same as that of ${\mathcal P}(CQ)$,
which are at most $d_C$ in number. Therefore if $d_Q^2 \leq d_C$ we cannot conclude that ${\mathcal P}(QC)$
is rank deficient, and hence $R_{QC}$ maybe nonzero in this case. On the other hand if  $d_Q^2> d_C$ we can 
indeed conclude that $R_{QC}=0$. This is related to the restriction on the number of product states in the general separable case considered in Eq. ~(\ref{eq:SepR12}).} 

\end{enumerate}

As a simple special case it also follows that $R$ vanishes for all Classical-Classical correlated states.
It is interesting that the quantum discord also vanishes for classical-classical states.
In the case of classical-quantum correlated states it vanishes only when the measurements are done on the classical
part of the state otherwise it does not vanish in general. In fact it vanishes iff the state is of this 
form and measurements are done on the classical part of the state \cite{ShunlongLuo2008}.
The quantity $R_{12}$ is a very simply calculable quantity unlike the quantum discord, and also vanishes for such states.
Thus, $R_{12}$ does not seem to include any correlations that are excluded by quantum discord.

Thus, for rank-1 two-qubit density matrix $R_{12}$ gives entanglement. But for higher ranks
$R_{12}$ contains not just entanglement but correlations of other types also.
{Since $R_{12}$ is non-constant and non-zero for separable states as seen in the case
of separable Werner states, it is not an entanglement monotone \cite{Plenio07,Horodeckirpm}.
But since $R_{12}$ equals concurrence for two-qubit pure states it is an entanglement monotone 
only on such states.
}
In this respect it is similar to the quantum discord 
\cite{ZurekQuantumDiscord2000,ZurekQuantumDiscord2001,KavanModi2012} which includes other correlations and it 
almost never vanishes. 
{Quantum discord too is not an entanglement monotone except on bipartite pure states
since in this case it equals von Neumann entropy.
}
However of course the discord has an information 
theoretic interpretation of a difference of two types of mutual information.  It will be seen below how
the difference between $R_{12}$ and $C_{12}$ for rank-2 two qubit states has  a natural interpretation.

\section{Rank-2 two-qubit density matrices or pure three-qubit states}
\label{sec:ThreeQubitPureStatesChp5}

Any rank-2 two-qubit density matrix can be obtained as a reduced density matrix of a suitable three-qubit pure state.
In this section the measure $R_{12}$ is explored for such rank-2 density matrices with this
intrinsic relationship to three-qubit states in mind.
Three-qubit states (pure as well as mixed) have been actively explored in the recent
past~\cite{Coffman,Carteret00,Acin00,Sudbery01,AcinThreeQubit01,Uhlmann06,UhlmannThreeTangle08,RungtaTripartite09,Vidal00,Eltschka12,Siewert12,Soojoon05,Tamaryan09,Ajoy10}, as they offer the simplest system that includes multiparty entanglement sharing. Construction of LU invariants was discussed in ~\cite{Sudbery01}, however the quantity 
$R_{12}$ was not considered therein. A pure multipartite entanglement measure, the 3-tangle was introduced in ~\cite{Coffman}. The archetypal states of GHZ and W were superposed and resulting entanglement studied in ~\cite{Greenberger89,Vidal00,Uhlmann06,UhlmannThreeTangle08}.  
These  were based on various entanglement measures like concurrence, negativity, log-negativity,
entanglement of formation, von Neumann entropy, squashed entanglement, and so on \cite{Horodeckirpm}. 

The canonical form of three-qubit pure states~\cite{Acin00} is extremely useful and given by
\beq
\label{eq:3stategene}
|\psi\rangle =\lambda_{0}|000\rangle + \lambda_{1} e^{i\theta}|100\rangle + \lambda_{2}|101\rangle 
+ \lambda_{3}|110\rangle + \lambda_{4}|111\rangle,
\eeq
where $\lambda_i \in[0,1]$ $\forall i=0$ to $4$, $\sum_i \lambda_i^2=1$ and $\theta\in[0,\pi]$.
It can be seen that Eq.~(\ref{eq:3stategene}) has only five independent parameters excluding the trace \cite{Sudbery01}.
The general three-qubit state with $16$ real parameters is reduced to these five by LU transforms that have $3\times 3=9$ real parameters, and by accounting for overall normalization and phase. Thus, this is the minimal form of any such state.
The following proposition for a two-qubit density matrix of rank two now follows.
\begin{proposition}{For a rank-2 two-qubit density matrix $R_{12}^2 \leq C_{12} \leq R_{12}$, or equivalently $C_{12}\le R_{12} \le \sqrt{C_{12}}$}. 
\label{prop:RCineq}
\end{proposition}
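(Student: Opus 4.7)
The strategy is to purify any rank-2 two-qubit state $\rho_{12}$ to a three-qubit pure state $|\psi\rangle_{123}$ in the Acin canonical form of Eq.~(\ref{eq:3stategene}), reducing the problem to five real amplitudes $\lambda_0,\ldots,\lambda_4$ and a single phase $\theta$. In this parametrization the reduced state $\rho_{12}=\tr_3|\psi\rangle\langle\psi|$, the partial-transpose-then-realignment matrix $\mathcal{R}(\rho_{12}^{T_2})$ and hence $R_{12}^4=16|\det\mathcal{R}(\rho_{12}^{T_2})|$, the concurrence $C_{12}$ (via Wootters' formula applied to $\rho_{12}\tilde\rho_{12}$), and the 3-tangle of the purification $\tau_3=4\lambda_0^2\lambda_4^2$ are all explicit functions of the $\lambda_i$ and $\theta$.

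The central step is to establish the sharp identity
\[
R_{12}^{4} \;=\; C_{12}^{2}\,\bigl(C_{12}^{2}+\tau_3\bigr)
\]
by direct algebraic comparison of the two sides in the canonical form. This identity is consistent with every test case available in the earlier sections: the pure two-qubit reductions ($\tau_3=0$ and $R_{12}=C_{12}$), the GHZ state ($C_{12}=R_{12}=0$, $\tau_3=1$), the W state ($\tau_3=0$, $R_{12}=C_{12}=2/3$), and the MEMS~I family (for which $R_{12}=C_{12}$ together with the identity forces $\tau_3=0$). Computationally it should emerge after simplifying $\det\mathcal{R}(\rho_{12}^{T_2})$ and collecting terms; a more conceptual route is to note that both sides are manifestly LU invariants of $|\psi\rangle_{123}$, so equality need only be checked on a generating set of such invariants, e.g.\ the one constructed in~\cite{Sudbery01}.

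Granted the identity, the two inequalities of the proposition follow painlessly. Squaring the lower bound $C_{12}\le R_{12}$ reduces to $C_{12}^{2}\tau_3\ge 0$, which is automatic. Squaring the upper bound $R_{12}^{2}\le C_{12}$ gives, for $C_{12}>0$, the condition $C_{12}^{2}+\tau_3\le 1$, which is precisely the Coffman--Kundu--Wootters monogamy relation $C_{12}^{2}+C_{13}^{2}+\tau_3 = 4\det\rho_1\le 1$ together with $C_{13}^{2}\ge 0$; when $C_{12}=0$ the identity forces $R_{12}=0$ and both inequalities become $0\le 0$. The main obstacle is the identity itself: the $\det\mathcal{R}(\rho_{12}^{T_2})$ calculation in the canonical form is a bounded but tedious polynomial computation in six real variables, and recognizing the resulting expression as $C_{12}^{2}(C_{12}^{2}+\tau_3)$ is the nontrivial algebraic step on which the whole argument hinges.
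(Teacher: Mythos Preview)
Your proposal is correct, and it shares its starting point with the paper: both purify $\rho_{12}$ to the Ac\'in canonical three-qubit form and compute $C_{12}$ and $R_{12}$ explicitly in the $\lambda_i$. From there the routes diverge. The paper does not invoke the 3-tangle at all in the proof of the proposition: it simply records $C_{12}=2\lambda_0\lambda_3$ and $R_{12}=2\lambda_0\lambda_3^{1/2}(\lambda_3^2+\lambda_4^2)^{1/4}$, and then checks the two inequalities by direct algebra, using only normalization $\lambda_3^2+\lambda_4^2\le 1-\lambda_0^2$ and the elementary bound $4x(1-x)\le 1$. The identity $R_{12}^4=C_{12}^2(C_{12}^2+\tau)$ is established \emph{afterwards} in the paper as a consequence, not as the engine of the proof. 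Your route, by contrast, promotes that identity to the centerpiece and then reads off both bounds from $\tau\ge 0$ and the CKW monogamy relation $C_{12}^2+C_{13}^2+\tau=4\det\rho_1\le 1$. This is arguably more illuminating, since it ties the upper bound directly to monogamy, whereas the paper's $4\lambda_0^2(1-\lambda_0^2)\le 1$ is monogamy in disguise.

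Two small remarks. First, your ``more conceptual route'' via LU invariants does not work as stated: knowing that both sides of the identity are LU invariants and agree on a few test states does not force equality---you would still need to identify them as the same polynomial in a generating set, which is no easier than the direct computation. Second, you are overestimating the difficulty of the identity. Once the explicit formulas for $C_{12}$, $R_{12}$, and $\tau=4\lambda_0^2\lambda_4^2$ are in hand, $R_{12}^4=16\lambda_0^4\lambda_3^2(\lambda_3^2+\lambda_4^2)=C_{12}^2(C_{12}^2+\tau)$ is a one-line check; the determinant of $\mathcal{R}(\rho_{12}^{T_2})$ in the canonical form collapses dramatically because of the row and column of zeros in $\rho_{12}$, and neither $\lambda_1$ nor $\theta$ survives.
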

\begin{proof}
Using the canonical form of a three-qubit state which is a purification of the given state $\rho_{12}$ leads to the 
following parametrization
\begin{equation}
\label{eq:RDM2Qubits}
\rho_{12} =\left( \begin{array}{cccc}
    \lambda_{0}^2 & 0 &  \lambda_{0}\lambda_{1} e^{i\theta} & \lambda_{0}\lambda_{3}\\
	      0 & 0 & 0 & 0\\
	      \lambda_{0}\lambda_{1} e^{-i\theta} & 0 & \lambda_{1}^2+\lambda_{2}^2  & \lambda_{1} \lambda_{3}e^{-i\theta}+\lambda_{2}\lambda_{4} \\     
	      \lambda_{0}\lambda_{3} & 0 &  \lambda_{1} \lambda_{3}e^{i\theta}+\lambda_{2}\lambda_{4} & \lambda_{3}^2+\lambda_{4}^2\\
\end{array}\right).
\end{equation}
The following measures are readily calculated:
\beq
\label{eq:C12R12}
C_{12}=2 \lambda_{0} \lambda_{3} \;\; \mbox{and}\;\; 
R_{12}=2\lambda_{0}\lambda_{3}^{1/2}(\lambda_{3}^2+\lambda_{4}^2)^{1/4}.
\eeq
Considering the difference
\beq
C_{12}^4 - R_{12}^4=-16\lambda_{0}^4 \lambda_{3}^2 \lambda_{4}^2 \le 0,
\eeq
it follows that $C_{12} \leq R_{12}$.

Similarly, 
\beq
\begin{split}
R_{12}^4-C_{12}^2&=4\, \lambda_{0}^2 \lambda_{3}^2 \left[4 \, \lambda_{0}^2(\lambda_{3}^2+\lambda_{4}^2)-1\right] \\ &\le 4\, \lambda_{0}^2 \lambda_{3}^2 \left[4 \, \lambda_{0}^2(1-\lambda_0^2)-1\right] \le 0,
\end{split}
\eeq
where the first inequality follows from $\lambda_{3}^2+\lambda_{4}^2  \le 1-\lambda_{0}^2$, a consequence of the normalization constraint. The second follows from the fact that $x(1-x)$ attains the maximum value of $1/4$ when $0\le x \le 1$. Thus, the proposition is proved.
\end{proof}

The equality $R_{12}=C_{12}$ in the case of rank-1 two qubit states becomes broadened into 
this inequality and the measure $R_{12}$ is never smaller than concurrence in the case of rank-2 states.
However there is also an upper bound on $R_{12}$, as it is never larger than $\sqrt{C_{12}}$. Thus, $C_{12}$ and $R_{12}$ when they vanish, do so simultaneously. 

\subsection{The 3-tangle, concurrence, and $R_{12}$ in three-qubit pure states}
\label{EqualityForConcurrence}

The difference $R_{12}^4-C_{12}^4$ is now shown to be related to the tripartite entanglement in the three-qubit purification, namely the 3-tangle. To recall the definition of the 3-tangle \cite{Coffman}, it is given by 
$\tau = C_{1(23)}^2-C_{12}^2-C_{13}^2$ where $C_{ij}$ is the concurrence between qubits $i$ and $j$. The quantity $C_{1(23)}$ is the concurrence between qubit $1$ and 
the pair of qubits $2$ and $3$, since in the case of three-qubit pure state the reduced density matrix of qubits 
$2$ and $3$ is of rank-$2$.  The 3-tangle $\tau$ has been shown to be permutationally  invariant and $0\le \tau \le 1$. 
The expression of the 3-tangle is rather complicated, and can be expressed as a Cayley hyperdeterminant \cite{Coffman}, however in terms of the parameters of the canonical form it is simply $\tau=4(\lambda_{0}\lambda_{4})^2$. Along with Eq.~(\ref{eq:C12R12}) this leads to the remarkably simple relation
\beq
\label{eq:RCtau}
R_{12}^4=C_{12}^2(C_{12}^2+\tau).
\eeq

It can be seen that in the case of two-qubit rank-one density matrices i.e. two-qubit pure states where 
the 3-tangle ($\tau$) is equal to zero the equality above reduces to the one proved in \cite{Udaysinh13}
namely $C_{12}=R_{12}$.
From the point of purification it implies that if the two quantities $C_{12}$ and $R_{12}$
of a two-qubit density matrix are fixed then after purification to a three-qubit pure state
the 3-tangle ($\tau$) of the final state also gets fixed and is given using Eq.~(\ref{eq:RCtau}) as follows:  
\beq
\tau=\dfrac{R_{12}^4 - C_{12}^4}{C_{12}^2}.
\eeq
It should be noted that the 3-tangle is permutation invariant, and hence the combination in the right hand side 
will inherit this property.

\begin{figure}[t!]
\begin{center}     
\includegraphics[scale=0.35]{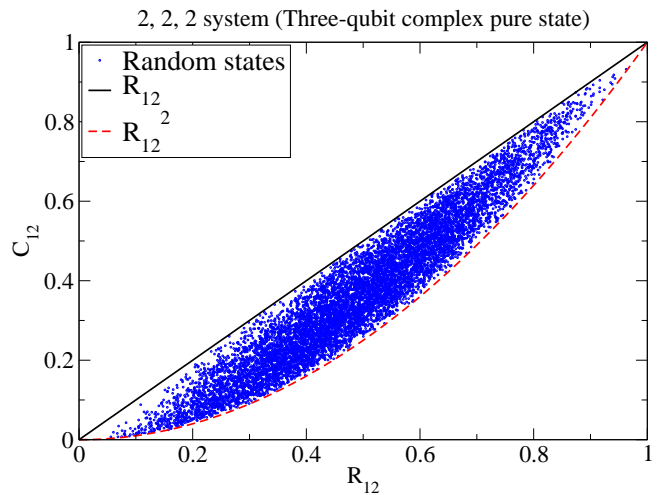}
\caption{(Color online) 
{The $R_{12}$ and the entanglement between two qubits $C_{12}$ having density matrix 
$\rho_{12}$ of rank-2. Also shown are the bounds from proposition $2$, the straight line being W states, while the 
parabola consists of maximally 3-tangled ones ($M3TS$).
Here, $10000$ random tripartite complex three-qubit pure states are used.}} 
\label{fig:inequalityfig1} 
\end{center}
\end{figure} 

\begin{figure}[t!]
\begin{center}     
\includegraphics[scale=0.35]{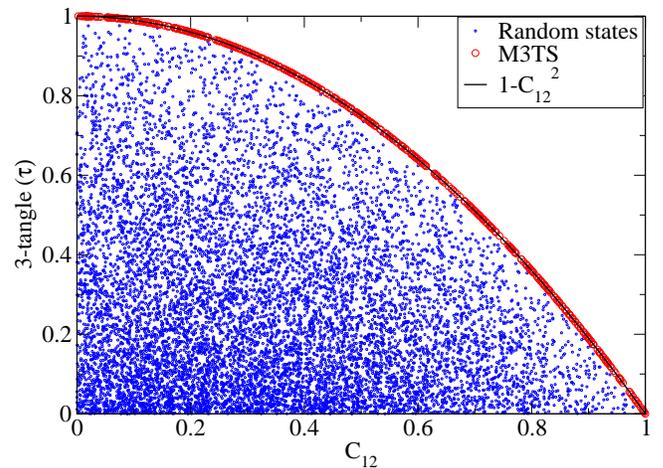}\caption{(Color online) 
{The 3-tangle and the concurrence between first and second qubit $(C_{12})$ of a three-qubit pure 
state is shown. Here, $10,000$ such states are sampled according to the uniform Haar measure. The $M3TS$ states 
are selected from Eq.~(\ref{eq:spestates}) where the value of the parameter $C_{12}$ is chosen randomly. }} 
\label{eq:tanconcdigr} 
\end{center}
\end{figure}

Fig.~\ref{fig:inequalityfig1} shows $R_{12}$ {\it vs}  concurrence between two 
qubits where the overall three-qubit state is randomly sampled  according to the Haar measure. 
The relation between the two measures as reflected in the inequality of Proposition \ref{prop:RCineq} is seen as the region bounding the straight line and the parabola. It is of interest to analyze the states that make up the boundaries of this region.

\subsection{The upper boundary are from $W$ states}
\label{CharacterizingBoundaries}
The upper boundary in Fig.~\ref{fig:inequalityfig1} corresponds to $C_{12} = R_{12}$.
Using Eq.~({\ref{eq:C12R12}) this is seen to imply that  $\lambda_4=0$ when the entanglement is nonzero.
The corresponding states are therefore the $W$ class of states \cite{Vidal00}:
\beq
\label{eq:Chp5Char1}
|\psi\rangle =\lambda_{0}|000\rangle + \lambda_{1} e^{i\theta}|100\rangle + \lambda_{2}|101\rangle 
+ \lambda_{3}|110\rangle.
\eeq 
It is easy to calculate then that 
$ C_{12}=R_{12}=2\lambda_{0}\lambda_{3}, C_{13}=R_{13}=2\lambda_{0}\lambda_{2}, C_{23}=R_{23}=2\lambda_2 \lambda_3$
The tripartite measure of entanglement, the 3-tangle simply denoted as $\tau$ is $0$ for all these states. Note that the relation in Eq.~(\ref{eq:RCtau}) implies that for the upper boundary indeed $\tau=0$. Thus, the $W$ class of states are similar to pure two-qubit states, in as much as there is no difference between the measure $R_{12}$ and concurrence.  $W$ states maximize the concurrence between two qubits for a given $R_{12}$. This is reminiscent of states that maximize concurrence for a given purity, and indeed these are seen to be on the upper boundary too for the following reason.
 
\subsubsection{MEMS~I is the reduced density matrix of $W$ state}
\label{sec:MEMS1isWstate}

It will be now shown that the MEMS~I are  reduced density matrices
of $W$ class states. 
In the section~\ref{sec:SimpleLUI} it is shown that $R_{12}=C_{12}$ for MEMS I.
Note that while the state 
above is an MEMS~I state only in the range $2/3\leq C_{12}\leq 1$, our use of it is the entire range 
$0 \leq C_{12} \leq 1$. Therefore this state is continued to be referred to as ``MEMS~1" for convenience 
although strictly speaking it is a generalization that is no more maximally entangled when $C_{12}<2/3$.

Using Eq.~(\ref{eq:RCtau}) it can be seen 
that if purification of MEMS~I is carried out, in this case to a three-qubit pure state, 
then the three-tangle of the purified state is equal to zero.
In the earlier section~\ref{CharacterizingBoundaries} it was shown that $C_{ij}=R_{ij}$ for all pairs in $W$ class
of states having zero three-tangle. This naturally raises the question of whether MEMS~I are the 
reduced density matrices of $W$ class of states. Indeed it is found that for parameter values of 
$\lambda_0^2=\lambda_3^2=C_{12}/2$ and $\lambda_2^2=1-C_{12}$ in Eq.~(\ref{eq:Chp5Char1}) of $W$ class of states
the reduced density matrix of the first and the second qubit is that of MEMS~I. 
The corresponding canonical form of the $W$ state for which the reduced density matrix of the first and second qubits is MEMS~I is
\begin{equation}
\label{Eq:MEMS1isWstate1}
 |\psi\rangle=\sqrt{\frac{C_{12}}{2}}|000\rangle+\sqrt{1-C_{12}}|101\rangle+\sqrt{\frac{C_{12}}{2}}|110\rangle.
\end{equation}
MEMS~I states appear later in this paper, when negativity is discussed, as a very different boundary.

\subsection{Lower boundary are from maximally 3-tangled states}
\label{section:M3TS}
The lower boundary of Fig.~(\ref{fig:inequalityfig1}) is characterized by states that have  $C_{12} = R_{12}^2$. Fixing the entanglement $C_{12}$ between two qubits which are part of a 3-qubit pure state, these maximize $R_{12}$. While this seems obscure, it is clear from Eq.~(\ref{eq:RCtau}) that for a given $C_{12}$, maximizing $R_{12}$ is the same as maximizing the
tripartite entanglement as measured by the 3-tangle. In this sense the states that make up the lower boundary are maximally 3-tangled states 
($M3TS$).
Using Eq.~(\ref{eq:C12R12}) one derives that $C_{12} = R_{12}^2$ implies, provided $\lambda_3 \neq 0, \, \lambda_0 \neq 0$,
\beqa
\label{eq:rell3l4}
\lambda_{0}^4-\lambda_{0}^2 \, (1-\lambda_{1}^2-\lambda_{2}^2)+\sfrac{1}{4} &=& 0.
\eeqa
This implies that $\lambda_{1}=\lambda_{2}=0$, else the discriminant of the quadratic equation in $\lambda_0^2$ becomes negative. 
Hence
\[ \lambda_{0}=\sfrac{1}{\sqrt{2}}, \; \lambda_{4}=\sqrt{\sfrac{1}{2}-\lambda_{3}^2}.\]

Thus, only one variable, say $\lambda_3$, is needed to parametrize the lower boundary. 
The pairwise concurrences in such states are $C_{12}=\sqrt{2} \, \lambda_{3}, \;C_{13}=0, \;\mbox{and}\; C_{23}=0.$
The states, using instead of $\lambda_3$, the entanglement $ C_{12}$, are given by
\beq
 \label{eq:spestates}
|\psi_{M3TS}\kt =\dfrac{1}{\sqrt{2}} \left(|000\rangle + C_{12} |110\rangle +\sqrt{1-C_{12}^2} |111\rangle  \right).
\eeq
These are explicit forms of maximally 3-tangled states. If $C_{12}=0$ this results in the three-qubit GHZ states. For  $C_{12}=1$ it reduces to $\left(|00\rangle +|11\rangle \right)|0\rangle/\sqrt{2}$, wherein qubits $1$ and $2$ are maximally entangled and qubit $3$ is not entangled to them. 

The 3-tangle for $|\psi_{M3TS}\kt$ states are 
\beq
\label{eq:M3TStau}
\tau=1-C_{12}^2 = 1-R_{12}^4.
\eeq
In Fig.~\ref{eq:tanconcdigr} the 3-tangle $\tau$ is plotted as a function of entanglement between qubits $1$
and $2$ ($C_{12}$) for a random sampling of three-qubit states. It is clear that the states in Eq.~(\ref{eq:spestates}) gives the maximum 3-tangle for the given value of $C_{12}$. Also as a consequence of this maximization it is found that $C_{13}=C_{23}=0$ for these states, which is a reflection of the monogamy of entanglement. When the entanglement between two qubits in a tripartite system is held fixed, maximizing the multipartite entanglement results in the other two pairs not being entangled.

For these states it also holds that $R_{23}=R_{13}=0$. Indeed the reduced density matrices are 
\beq
\rho_{13}=\rho_{23}=\dfrac{1}{2}\left(|1\rangle \langle 1| \otimes |\alpha \kt \br \alpha |+
|0\rangle \langle 0| \otimes |0\rangle \langle 0|\right),
\eeq
where
\[ |\alpha\kt =C_{12}|0\rangle + \sqrt{1-C_{12}^2} |1\rangle .\]
It is quite clear that these are classical-quantum correlated states as in Eq.~(\ref{ClassicalQState})
and it has been shown already in the section~\ref{sec:SimpleLUI} that $R_{13}$ and $R_{23}$ 
indeed vanishes for this class. 
{
On the other hand the reduced density matrix of the first and the second qubit is  a mixture of two Bell states
\beq
\rho_{12}=\left(\dfrac{1+C_{12}}{2}\right)|\phi^{+}\rangle\langle\phi^{+}|+
\left(\dfrac{1-C_{12}}{2}\right)|\phi^{-}\rangle\langle\phi^{-}|,
\eeq
therefore a special case of Bell-diagonal states \cite{Wootersentform,Verstraete2001}.
}

A natural generalization of $|\psi_{M3TS}\kt$ presents itself as  states whose 3-tangle is maximized under constraints that 
two of the pair entanglements are fixed, say $C_{12}$ and $C_{13}$. Stationary points of the 3-tangle with these constraints lead to ($\lambda_0=1/\sqrt{2},\lambda_1=0,\lambda_2=C_{13}/\sqrt{2},\lambda_3=C_{12}/\sqrt{2}$) 
and can be shown to be a maxima.
Thus, the states have two parameters and are given as follows:
\begin{equation}
\label{eq:spestates2}
\begin{split}
|\psi\rangle=\dfrac{1}{\sqrt{2}}\bigg(|000\rangle+ C_{13} |101\rangle 
+C_{12} |110\rangle+ \\ \sqrt{1-C_{12}^2-C_{13}^2} |111\rangle\bigg),
\end{split}
\end{equation}
and the 3-tangle $\tau$ of these states is given as follows:
\beq
\label{eq:ThreeTangleM3TS2}
\tau=1-C_{12}^2-C_{13}^2.
\eeq
Thus, the coefficients $C_{12}$ and $C_{13}$ have to be chosen such that $0\leq C_{12}^2+C_{13}^2 \leq 1$.
Here, $C_{12}=\sqrt{2}\lambda_3$, $C_{13}=\sqrt{2}\lambda_2$ 
and interestingly it is found that $C_{23}=C_{12}C_{13}$.
The states in Eq.~(\ref{eq:spestates2}) are natural generalizations of the $M3TS$.
Contrary to the $M3TS$ case, if the entanglement (concurrences) between first and second qubit, and between first 
and third qubit are kept fixed at some non-zero value then the maximization of the 3-tangle does not 
lead to zero entanglement (concurrence) between the second and third qubit.
For these states it can be seen that $R_{12}=\sqrt{C_{12}}(1-C_{13}^2)^{1/4}$,
$R_{13}=\sqrt{C_{13}}(1-C_{12}^2)^{1/4}$ and $R_{23}=\sqrt{C_{12}C_{13}}(1-C_{12}^2)^{1/4}(1-C_{13}^2)^{1/4}$, so that $R_{23}=R_{12}R_{13}$ as well.
These expressions reduces to those of $M3TS$ for $C_{13}=0$. For other interpretations and appearance of the $M3TS$ please see the last section.

\section{Two qubit states with rank $>2$}
\label{sec:2qubithigherrank}

For rank-1 and rank-2 states, or pure two qubit and pure three qubit states the connections between the measure 
$R_{12}$ derived from the partial transpose followed by realignment and standard entanglement measures such a 
concurrence, and 3-tangle is striking and instructive. It also leads naturally to a segregation of the $W$ states, 
as well as to a class of states that are in an essential sense maximally tangled, the $M3TS$ states. When we step 
beyond to rank-3 and rank-4 states, the picture predictably gets murkier. The purifications are for example to 
systems of two qubits and a qutrit for the rank-3 case. Equivalents of 3-tangles are to our knowledge not readily 
available. Yet from lower rank cases we can expect that the difference of some powers of  $R_{12}$ and $C_{12}$ may 
reflect multiparty entanglement present in such purifications. This expectation is predicated on the inequality 
$R_{12} \ge C_{12}$ continuing to hold for rank-3 and rank-4 cases, which appears to be true.

{This was checked numerically in two ways. One, by direct Haar sampling of pure states of two qubits and one 
qutrit (rank-3 case) or one ququad (rank-4 case), and constructing $R_{12}$ and $C_{12}$.} The result is shown in Fig.~(\ref{fig:inequality34}) where only 
rank-3 and rank-4 cases are shown. It is clear that the straight line corresponding to $C_{12}=R_{12}$ is being 
typically avoided, and in fact $R_{12}$ {\it vs} $C_{12}$ is quite large. States close to the equality line are therefore all 
rank-1 and some rank-2 states. This is in accordance with the picture that emerged of the difference being a 
multiparty entanglement of the purification. We expect that there is more of such entanglement present in 
purifications of rank-3 or rank-4 states. Note that the latter can also be thought of reduced density matrices of 
four qubit pure states.
\begin{figure}[t!]
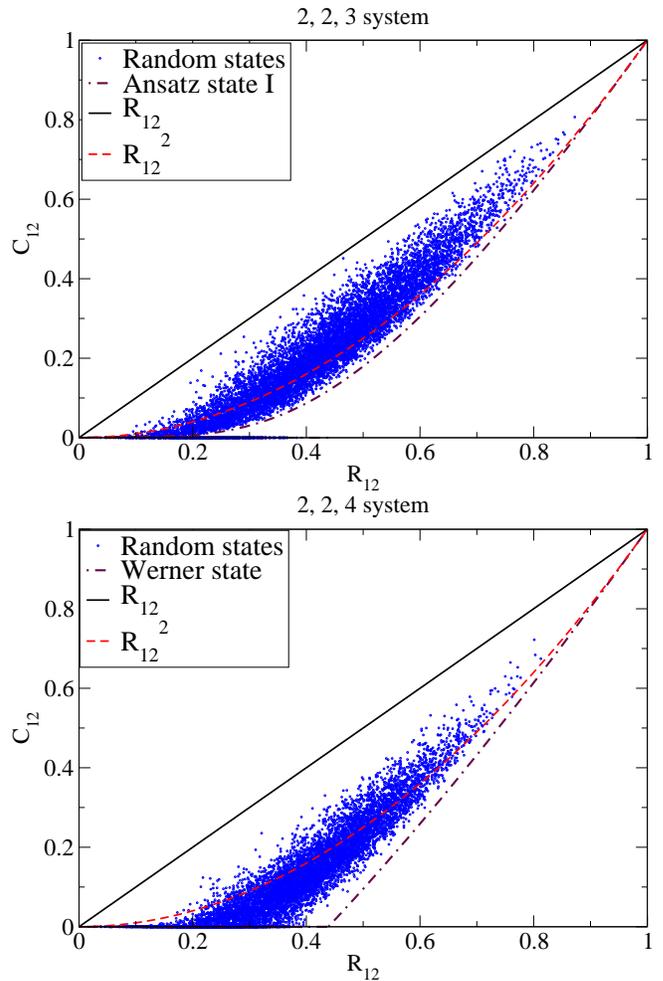

\begin{center}     
\includegraphics[scale=0.35]{fig3a.eps}
\includegraphics[scale=0.35]{fig3b.eps}
\caption{(Color online) 
{The $R_{12}$ and the entanglement between two qubits $C_{12}$ having density matrix 
$\rho_{12}$ of rank-3 and -4 at top and bottom respectively. Also shown are the boundaries for rank-2 states for comparison. Ansatz-state-I  curve in top figure corresponds to Eqs.~(\ref{Eq:R12C12Rank3Second}) 
and (\ref{Eq:R12C12Rank3}). Werner state curve in the bottom figure corresponds to Eq.~(\ref{Eq:R12C12Rank4Ineq}).
Here, $10000$ random tripartite complex pure states with respective dimension of the third 
subsystem are used.}} 
\label{fig:inequality34} 
\end{center}
\end{figure}

The second method by which this was checked was by explicitly constructing the density matrix $\rho_{12}$ (real case) in its spectrally decomposed form, where its four eigenvectors and eigenvalues are parameterized.
Then, with this parameterized eigenvalues and eigenvectors, and using Mathematica~$9$ it is checked that 
the density matrix $\rho_{12}$ satisfies $R_{12} \ge C_{12}$. 
Thus, it is proved for real rank-4 two-qubit density matrices 
$R_{12} \ge C_{12}$ and it is 
most likely valid for complex states as well. Considering arbitrary rank-4 perturbations of pure states also validated this inequality.
{Further, in this regard see the comment at the end of the paper.}

{

\subsection{States in the lower boundary of the rank-3 case}
\label{sec:AnsatzC12R12Rank3}

States that give the lower boundaries in 
Fig.~\ref{fig:inequality34} for rank-3 and -4 are now discussed. They are assumed to be mixtures of Bell states as was the lower boundary of the rank-2 case, namely the reduced density matrix of $M3TS$ states.
Consider first the case of rank-3, for which an ansatz for
the states in the lower boundary is 
\beq
\label{Eq:AnsatzRank3}
\rho_{12}=\dfrac{1-p}{2}\,\left(|\psi^{+}\rangle\langle\psi^{+}|+|\psi^{-}\rangle\langle\psi^{-}|\right)
+p\,|\phi^{+}\rangle\langle\phi^{+}|.
\eeq
This will be referred to as ``ansatz-state-I" in the following.
It is readily verified that 
\beq
\label{Eq:C12pR12pRank3}
C_{12}=\mbox{max}\{0, 2p-1\},\;\;R_{12}=\sqrt{p}\, |2p-1|^{1/4}.
\eeq

For $0\leq p\leq 1/2$ the concurrence is zero, however $R_{12}$ increases from $0$ to a maximum value of 
$(1/3)^{3/4}$ when $p=1/3$, and decreases again to zero at $p=1/2$. Thus, there are two distinct segments in 
the $R_{12}$ {\it vs}  $C_{12}$ graph for this state, one a horizontal segment 
\beq
\label{Eq:R12C12Rank3Second}
C_{12}=0 \;\;\mbox{for} \;\; 0\leq R_{12}\leq (1/3)^{3/4}.
\eeq
and the other, the curve 
\beq
\label{Eq:R12C12Rank3}
R_{12}= C_{12}^{1/4} \sqrt{\dfrac{1+C_{12}}{2}},
\eeq
when $1/2<p\leq 1$. Both these segments are shown in Fig.~\ref{fig:inequality34} where the regions  
\beq
\label{Eq:R12C12Rank3Ineq}
\begin{split}
&C_{12} < R_{12} \leq C_{12}^{1/4} \sqrt{\dfrac{1+C_{12}}{2}}\;\; \mbox{for}\;\; C_{12} > 0, \\ &\mbox{and}\;\; 0\leq R_{12} \leq (1/3)^{3/4} \;\; \mbox{for}\;\; C_{12} = 0.
\end{split}
\eeq 
are populated. 

Apart from this figure, strong evidence that the lower boundary is indeed given by states in 
Eq.~(\ref{Eq:AnsatzRank3}) is checked numerically by adding a random density matrix $\rho_{R}$
to this ansatz such that the resultant state is still of rank-3.
This random density matrix is such that its eigenvalues are chosen randomly with eigenvectors as $|\psi^{\pm} \rangle$ and  
$|\phi^{+}\rangle$ such that the final density matrix is of rank-3 with unaltered eigenvectors. The
eigenvalues are selected as $\cos^2(\theta)$, $\sin^2(\theta)\cos^2(\phi)$ and $\sin^2(\theta)\sin^2(\phi)$ where 
$\theta$ and $\phi$ are independent random variables chosen uniformly from $[0, \pi]$ and $[0, 2\pi]$ respectively.
The final density matrix $\rho_{12}'$ is given as follows:
\beq
\label{Eq:AnsatzPlusRandom}
\rho_{12}'=\dfrac{\rho_{12}+\varepsilon\rho_{R}}{1+\varepsilon},
\eeq
where $\rho_{12}$ is the ansatz state and $\varepsilon$ is the perturbation parameter which controls the amount of 
randomness added to the ansatz state $\rho_{12}$. 
In Fig.~\ref{fig:AnsatzRandom} results are shown for states 
generated as per Eq.~(\ref{Eq:AnsatzPlusRandom}) for various values of $\varepsilon$. It can be seen from the 
figure (and was also checked numerically) that there is no violation of  Eqs.~(\ref{Eq:R12C12Rank3Ineq}). A more general 
rank-3 state which lies close to the subspace spanned by $\{|\psi^{\pm}\kt , \; |\phi^+ \kt\}$ may be constructed by Gram-Schmidt orthonormalization. Results not presented here reconfirms that the boundary is populated with states as in Eq.~\ref{Eq:AnsatzRank3}.

\begin{figure}[t!]
\begin{center}     
\includegraphics[scale=0.35]{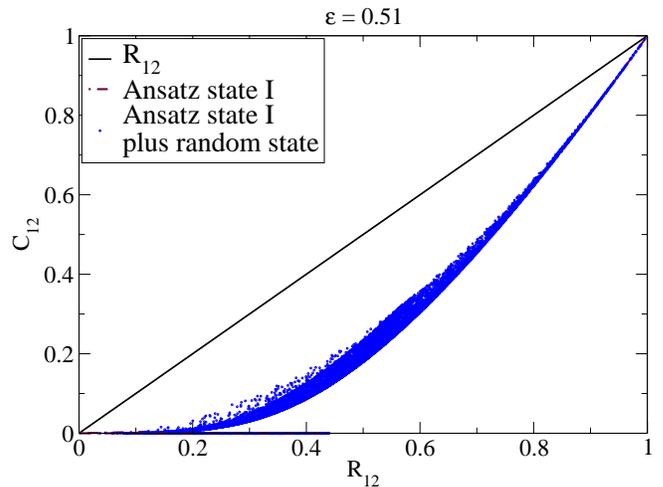}
\caption{(Color online) 
{The $R_{12}$ and the entanglement between two qubits $C_{12}$ having density matrix $\rho_{12}'$ of rank $3$ given 
in Eq.~(\ref{Eq:AnsatzPlusRandom}) for $\varepsilon=0.51$. Here, $20000$ 
such states are sampled randomly in each case. Also shown are the bounds from Eqs.~(\ref{Eq:R12C12Rank3Ineq}). 
}} 
\label{fig:AnsatzRandom} 
\end{center}
\end{figure} 

\subsection{Werner states form the lower boundary in the rank-4 case}
\label{sec:WstateC12method}

Evidence is now presented that Werner states form the lower boundary of rank-4 density matrices, and hence the whole $C_{12}-R_{12}$ diagram. Just as in rank-2 and rank-3 cases, these border states are also Bell-diagonal. Consider the Werner state $\rho_{W}=(1-p) I /4+p \, |\psi^{-}\rangle \langle \psi^{-}|$ \cite{WernerState1989}
($0\leq p \leq 1$), which is entangled iff $1/3 < p \leq 1$ and in that case the entanglement, as measured by the concurrence, is $C_{12}=(3p-1)/2$. As mentioned in the introduction, $R_{12}=p^{3/4}$ for these states. Substituting $p$ in terms of $R_{12}$ in the expression for concurrence one obtains the following:
\beq
\label{Eq:R12C12Rank4Ineq}
C_{12}=\mbox{max}\left\{0,\dfrac{3 R_{12}^{4/3}-1}{2}\right\}.
\eeq
Thus $C_{12}=0$ for $0\leq R_{12} \leq (1/3)^{3/4}$ which is also the case for the ansatz given for rank-3 border states (refer Eq.~(\ref{Eq:R12C12Rank3Ineq})). It is greater than zero for 
$(1/3)^{3/4}<R_{12}\leq 1$, in which case
$R_{12}=\left(\left(2 C_{12}+1\right)/3\right)^{3/4}$.
This curve is plotted in that part of  Fig.~\ref{fig:inequality34} that corresponds to rank-4 states.

To check that Eq.~(\ref{Eq:R12C12Rank4Ineq}) indeed forms the lower boundary for the rank-4 cases, again random perturbations are added 
to the border (Werner) states. First a random tripartite pure is selected
consisting of two qubits and a ququad. The reduced density matrix
of the two qubits $\rho_{R}$ is then added to the Werner state: 
\beq
\label{Eq:WernerRandom}
\rho_{12}=(\rho_{W}+\varepsilon\rho_{R})/(1+\varepsilon)
\eeq

It can be seen from Fig.~\ref{fig:WernerRandom} (and also verified numerically) that there are no states that violate the inequality 
in  Eq.~(\ref{Eq:R12C12Rank4Ineq}). Thus, for the rank-4 case (and hence for generic two-qubit states) it follows that
\beq
\label{Eq:R12C12Rank4IneqM}
C_{12} \leq R_{12} \leq \left(\dfrac{2 C_{12}+1 }{3}\right)^{3/4} \;\; \mbox{for}\;\; C_{12} \geq 0.
\eeq

In Fig.~\ref{fig:AnalyticalCurves} the various boundaries of the $C_{12}-R_{12}$ diagram are shown for arbitrary two-qubit states.
 It includes the parabola from proposition $2$, and curves from Eqs.~(\ref{Eq:R12C12Rank3Ineq})
 and (\ref{Eq:R12C12Rank4IneqM}).
The curve corresponding to rank-1 is the common upper boundary for all the ranks, and is simply the line $C_{12}=R_{12}$.
It can be seen that as the rank increases the corresponding lower boundary gets shifted in the downward 
direction, but always remains a Bell-diagonal state.

}

\begin{figure}[t!]
\begin{center}     
\includegraphics[scale=0.35]{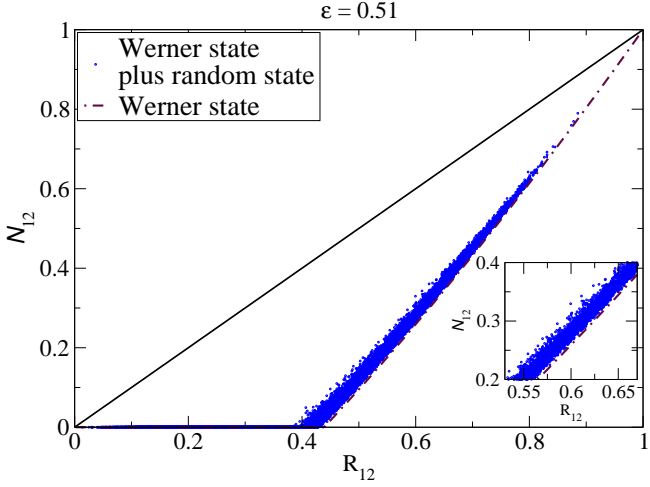}
\caption{(Color online) 
{The $R_{12}$ and the entanglement between two qubits $C_{12}$ having density matrix $\rho_{12}'$ of rank-$4$ given 
in Eq.~(\ref{Eq:WernerRandom}) for values of $\varepsilon=0.51$. Here, $20000$ such states are sampled randomly 
in each case. Werner state curve corresponds to Eq.~(\ref{Eq:R12C12Rank4Ineq}). Also shown are the bounds from 
Eq.~(\ref{Eq:R12C12Rank4Ineq}).
}} 
\label{fig:WernerRandom} 
\end{center}
\end{figure}

\begin{figure}[t!]
\begin{center}     
\includegraphics[scale=0.35]{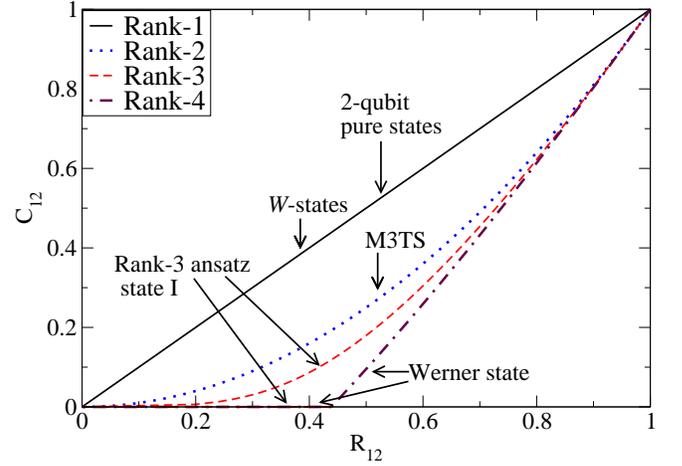}
\caption{(Color online) Boundary curves for $R_{12}$ and the entanglement between two qubits $C_{12}$ having 
density matrix $\rho_{12}$ of all the ranks as per given in proposition $2$, Eqs.~(\ref{Eq:R12C12Rank3Ineq})
and (\ref{Eq:R12C12Rank4IneqM}).
Also shown are various classes of states lying on the respective boundaries.
} 
\label{fig:AnalyticalCurves} 
\end{center}
\end{figure} 

\subsection{Concurrence and $R_{12}$ in $X$ states}
 
An important subset of two qubit states are the so-called $X$ states 
\cite{TingYu2007,ARPRau2009,ARPRau2010,Fanchini2010} which appear in many physical contexts from quantum optics to 
condensed matter \cite{Werlang2010,Sarandy2009,Fanchini2010}. 
They have been intensively investigated, and there are analytical formulas for the quantum discord of the $X$ 
states \cite{ARPRau2010,Fanchini2010}, which were later shown to have very small worst-case error by 
giving explicit counterexamples \cite{YichenHuang2013}.
It is therefore of interest to investigate them in the context of this paper, especially as they are in 
general of full rank. 
 The states are given by the following form that makes the sobriquet ``$X$ states" evident:
\begin{equation}
\label{eq:XStates}
\rho_{X} =\left( \begin{array}{cccc}
	      a & 0 & 0 & w\\
	      0 & b & z & 0\\
	      0 & z^* & c & 0\\
	      w^* & 0 & 0 & d\\
	     \end{array}\right).
\end{equation}
This describes a quantum state provided the unit trace and positivity conditions $a+b+c+d=1$, 
$\sqrt{bc}\geq |z|$, $\sqrt{ad}\geq |w|$ are satisfied. $X$ states are entangled if and only if either
$\sqrt{bc}\le |w|$ or $\sqrt{ad}\le |z|$, and both conditions cannot hold simultaneously \cite{AnnaSanpera1998}.
Concurrence is $C_{12}=2 \,\mbox{max}\{0, |z|-\sqrt{ad}, |w|-\sqrt{bc} \}$.
For $X$ states it is readily seen that 
\beq
R_{12}=2|ad-bc|^{1/4}||z|^2-|w|^2|^{1/4},
\label{eq:R12X}
\eeq
which has an interesting structure, involving the product of the determinants of the reshaped
diagonal and anti-diagonal.

\begin{proposition}{For X states $R_{12}\ge C_{12}.$}
\begin{proof} That 
\[
|ad-bc|^{1/4} \geq ||w|^2-bc|^{1/4}, \; ||w|^2-|z|^2|^{1/4} \geq ||w|^2-bc |^{1/4},
\]
follow from the conditions that $\sqrt{ad}\geq |w|$ and $|z| \leq \sqrt{bc}$, respectively. It follows
then that 
\[
R_{12} \geq 2\, ||w|^2-bc|^{1/2}, \; R_{12} \geq 2\, ||z|^2-ad|^{1/2},
\]
the latter being derived similarly. However it also follows easily that $||w|^2-bc|^{1/2}\geq |w|-\sqrt{bc}$
provided $|w|\geq \sqrt{bc}$ and similarly $ ||z|^2-ad|^{1/2}\geq |z|-\sqrt{ad}$ when $|z| \geq \sqrt{ad}$. 
Thus, whenever the concurrence is nonzero it is necessarily smaller than or equal to $R_{12}.$

\end{proof}
\end{proposition}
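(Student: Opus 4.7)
The plan is to reduce everything to a single case by exploiting the special structure of $X$ states. If $C_{12}=0$ the inequality is trivial, so assume $C_{12}>0$. Then exactly one of the two non-trivial terms inside the max defining $C_{12}$ is positive, and, as noted just after Eq.~(\ref{eq:XStates}), the two entanglement conditions $|w|\geq\sqrt{bc}$ and $|z|\geq\sqrt{ad}$ cannot both hold. So I would split into these two cases, prove the inequality in one, and invoke the obvious symmetry ($a\leftrightarrow c$, $b\leftrightarrow d$, $w\leftrightarrow z$) to dispose of the other.

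Take the case $|w|\geq\sqrt{bc}$. The positivity constraints $\sqrt{ad}\geq|w|$ and $\sqrt{bc}\geq|z|$ then yield the chain $|z|\leq\sqrt{bc}\leq|w|\leq\sqrt{ad}$, which is what lets me strip the absolute values in Eq.~(\ref{eq:R12X}): $|ad-bc|=ad-bc$ and $||z|^2-|w|^2|=|w|^2-|z|^2$. The key observation I would push is that both of these quantities individually dominate the single quantity $|w|^2-bc$, the first because $ad\geq|w|^2$ and the second because $|z|^2\leq bc$. Substituting these bounds into $R_{12}$ gives $R_{12}\geq 2\sqrt{|w|^2-bc}$.

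The last step is the elementary inequality $|w|^2-bc\geq (|w|-\sqrt{bc})^2$, which is just $2\sqrt{bc}\,(|w|-\sqrt{bc})\geq 0$ and holds under the case hypothesis. This gives $R_{12}\geq 2(|w|-\sqrt{bc})=C_{12}$, completing the case. The symmetric case $|z|\geq\sqrt{ad}$ runs identically with the roles of $(ad,bc,w,z)$ interchanged, producing the "sandwich" quantity $|z|^2-ad$ in place of $|w|^2-bc$.

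I do not foresee a genuine obstacle. The one conceptual step worth flagging is identifying the auxiliary quantity $|w|^2-bc$ (resp.\ $|z|^2-ad$): it is simultaneously upper-bounded by both $|ad-bc|$ and $||z|^2-|w|^2|$, which lets the two fourth roots in $R_{12}$ be collapsed into a single square root, and lower-bounded by $(C_{12}/2)^2$ via the trivial quadratic identity above. Once this intermediate is spotted, the proof is essentially three lines of algebra per case, and the mutual exclusivity of the two entanglement conditions ensures the case split is exhaustive.
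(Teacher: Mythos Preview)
Your proposal is correct and follows essentially the same route as the paper's proof: both identify the auxiliary quantity $|w|^2-bc$ (resp.\ $|z|^2-ad$) as simultaneously dominated by each factor under the fourth root in $R_{12}$ and dominating $(C_{12}/2)^2$ via the elementary inequality $|w|^2-bc\ge(|w|-\sqrt{bc})^2$. Your write-up is somewhat more explicit about the case split and the chain $|z|\le\sqrt{bc}\le|w|\le\sqrt{ad}$ that lets the absolute values be dropped, but the argument is the same.
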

As a simple corollary whenever $R_{12}=0$, then $C_{12}$=0. For $X$ states it is clear that $R_{12}=0$ whenever
$ad=bc$ or $|w|=|z|$. It is not evident from the concurrence expressions that it is zero in these cases, but it is so.

{
\section{Negativity and $R_{12}$}
\label{sec:ComR12N12}

While the $R_{12}$-concurrence pair has been exhaustively studied in the case of two-qubit states, it is 
interesting to compare $R_{12}$ with other measures of entanglement. In particular as $R_{12}$ is crucially
dependent on the partial transpose operation, it is of interest to compare it with ``negativity" \cite{Karol1998,Eisert1999,vidal02}
which is exclusively based on the partial transpose operation.

For a two-qubit state $\rho_{12}$, after partial transpose, at most only one eigenvalue can be negative. The negativity is then defined as 
\begin{equation}
N(\rho_{12})=\mbox{max}\{0,-2\mu_{min} \},
\end{equation}
where $\mu_{min}$ is the minimum eigenvalue of the partial transpose of $\rho_{12}$. Unlike concurrence, negativity 
can be calculated for bipartite systems of any dimensionality. If the negativity is non-zero then the state is entangled. 
In that case $\rho_{12}$ is said to be a NPT (negative partial transpose) state otherwise it is a PPT (positive 
partial transpose) state and is guaranteed to be separable only for $2\times2$ and $2\times3$ systems 
\cite{MHorodecki96}. 
Concurrence and negativity for two-qubit states have been previously compared
\cite{Eisert1999,Verstraete2001,Grudka2004} and it was shown that the following holds:
\begin{equation}
\label{eq:InequalityC12N12}
 \sqrt{(1-C_{12})^2+ C_{12}^2}-(1-C_{12}) \leq N_{12} \leq C_{12}.
\end{equation}
It was also shown that the class of states which satisfies the bound $N_{12}=C_{12}$ includes, two-qubit pure states, and 
Bell-diagonal states 
\cite{Verstraete2001,wootters98} (which include Werner states) while the class of states which 
satisfy the lower bound are rank-2 maximally entangled mixed states of rank  {\it i.e.} MEMS~I
\cite{SatoshiIshizaka2000,Verstraete2001,WJMunro01,TzuChiehWei03,NicholasPeters04}.

\subsection{Rank-1 and Rank-2 states}

\begin{figure}[t!]
\begin{center}     
\includegraphics[scale=0.35]{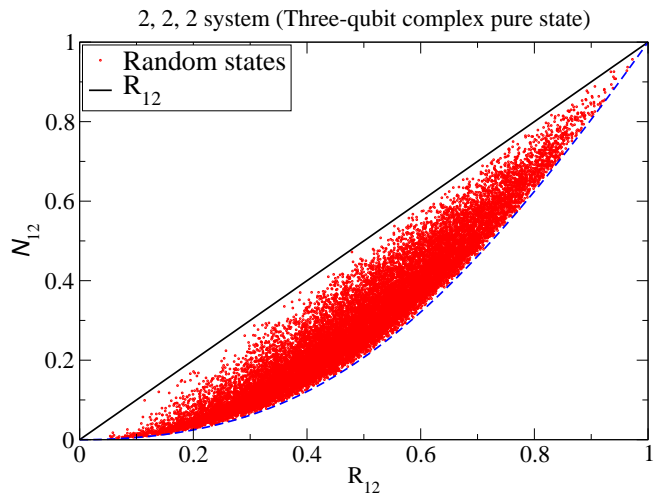}
\caption{(Color online) 
{
The $R_{12}$ and the negativity between two qubits $N_{12}$ having density matrix $\rho_{12}$ of rank-2. 
Also shown are various analytical curves. The lower curve corresponds to the lower boundary in 
Eq.~(\ref{eq:N12R12Curve}).
Here, $20000$ random tripartite complex three-qubit pure states are used.
}} 
\label{fig:222C12Neg12R12} 
\end{center}
\end{figure} 

It should be noted that the inequality in Eq.~(\ref{eq:InequalityC12N12}) holds true for two-qubit density 
matrices of all ranks. However,  as shown below the same inequality, with $R_{12}$ simply replacing $C_{12}$,
 holds true  for the restricted class of two-qubit rank-1 or rank-2 states. That is for these states
 \begin{equation}
\label{eq:N12R12Curve}
 \sqrt{(1-R_{12})^2+ R_{12}^2}-(1-R_{12}) \leq N_{12} \leq R_{12}, 
\end{equation}
To begin with, the inequality in Eq.~(\ref{eq:N12R12Curve}) is checked for hundred thousand random states 
$\rho_{12}$ which are reduced density matrices of pure states of 3 qubits that are drawn from the Haar measure. The bounds are found to hold true in every case, including in a subset  which is shown in 
Fig.~(\ref{fig:222C12Neg12R12}). The boundary states are now presented and verified to be boundaries by 
using perturbations as in the cases above. 

\subsubsection{Pure states have $N_{12}=R_{12}$}
It can be seen easily that for any arbitrary two-qubit pure state $|\psi\rangle=a|00\rangle + b|01\rangle + c|10\rangle + d|11\rangle$,
$N_{12}$ and $R_{12}$ are the same and given by $ N_{12}=R_{12}=2|ad-bc|$. 
Thus, all two-qubit pure states corresponds to the upper boundary of Fig.~\ref{fig:222C12Neg12R12}. It is interesting that while apart from pure states,
$C_{12}=R_{12}$ also for all (rank-2) reduced density matrices from the $W$ class of 3 qubit states, this is no longer the case when negativity is
compared with $R_{12}$. Such states are found not to be special and fill the interior of the region in 
Fig.~\ref{fig:222C12Neg12R12} rather uniformly.

\subsubsection{The lower boundary are MEMS~I}

The $R_{12}$-concurrence lower boundary consisted of $M3TS$ states. These do not form the border
when negativity replaces the concurrence. It is easy to see that for $M3TS$ states $N_{12}=R_{12}^2$. 
MEMS~I \cite{SatoshiIshizaka2000,Verstraete2001,WJMunro01,TzuChiehWei03,NicholasPeters04} were special in the 
$R_{12}$-concurrence case and was the upper boundary as $C_{12}=R_{12}$. The spectral decomposition of MEMS~I 
is 
 \beq
\rho_{MEMS\,I}=(1-C_{12})|01\rangle\langle01|+C_{12}|\phi^{+}\rangle\langle\phi^{+}|
\label{Eq:MEMSIspect}
\eeq
where $R_{12}$ can also be used in place of $C_{12}$ owing to their equality for such states. 
 Again simple explicit calculations yield in this case that $N_{12}=\sqrt{(1-R_{12})^2+ R_{12}^2}-(1-R_{12})$, the lower bound in Eq.~\ref {eq:N12R12Curve}.

 That these states turn out to form the lower 
boundary in the $R_{12}$-negativity case (restricted to rank-2 states) is established by using random rank-2 
perturbations and the result is displayed in Fig.~\ref{fig:222C12Neg12R12M3TSMEMS1}.
The procedure of perturbation is now given. As proved in Sec.~\ref{sec:MEMS1isWstate} that MEMS~I 
is the reduced density matrix of a subset of $W$ class of states as given in Eq.~(\ref{Eq:MEMS1isWstate1}).
Using this 3-qubit purifications of perturbations of MEMS~I can be constructed as  
\beq
\label{Eq:PertMEMS1}
|\phi\rangle=\dfrac{|\psi\rangle+\varepsilon|\psi_R\rangle}{\sqrt{1+\varepsilon^2+\varepsilon\left
(\langle\psi|\psi_R\rangle+\langle\psi_R|\psi\rangle\right)}}.
\eeq
Here $|\psi\rangle$ is a state in the W class restricted to the form in Eq.~(\ref{Eq:MEMS1isWstate1})  ($C_{12}$ uniformly random in $[0,1]$), and $|\psi_R\rangle$ is a three-qubit random pure state selected according to the Haar measure and $\varepsilon$ is the perturbation parameter. Results in Fig.~\ref{fig:222C12Neg12R12M3TSMEMS1} are presented for $\varepsilon=0.51$. It can be seen that the inequality in Eq.~(\ref{eq:N12R12Curve}) is strictly respected. A rather large value of the ``perturbation" is used to clearly show the strict spread of the values to the left of the boundary curve.

\begin{figure}[t!]
\begin{center}     
\includegraphics[scale=0.35]{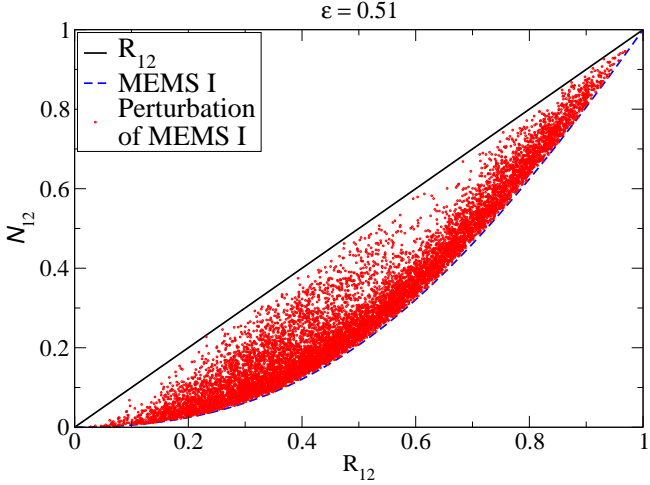}
\caption{(Color online) The $R_{12}$ and the negativity between two qubits $N_{12}$ having density matrix 
$\rho_{12}$ obtained using Eq.~(\ref{Eq:PertMEMS1}). Parameter used is $\varepsilon=0.51$.
Here, $10000$ such states are selected.}
\label{fig:222C12Neg12R12M3TSMEMS1} 
\end{center}
\end{figure}

\begin{figure}[t!]
\begin{center}     
\includegraphics[scale=0.35]{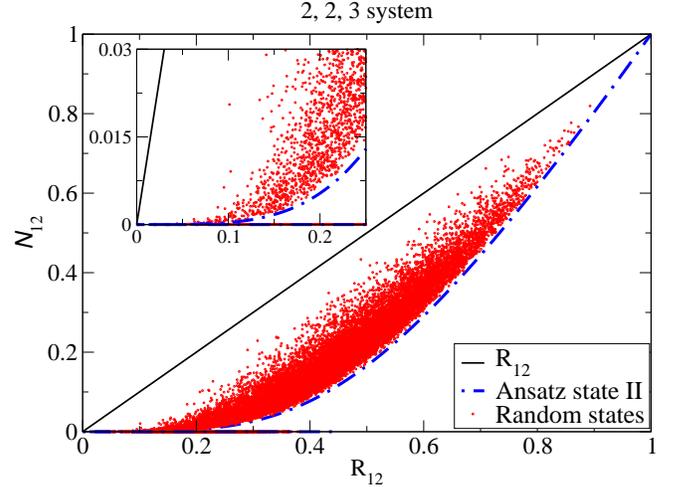}\vspace{1cm}
\caption{(Color online) 
{
The $R_{12}$ and the negativity between two qubits $N_{12}$ having density matrix 
$\rho_{12}$ of rank-$3$. Also shown are various analytical curves. 
Ansatz-state-II in the figure corresponds to Eq.~(\ref{Eq:N12R12CurveRank3}).
Here, $20000$ random tripartite complex pure states with respective dimension of the third 
subsystem are used.
}} 
\label{fig:223224C12Neg12R12} 
\end{center}
\end{figure} 
\subsection{Two qubit states with rank $> 2$}

For rank-3 and rank-4 states it is found from extensive numerical sampling that the lower bound in Eq.~(\ref{eq:N12R12Curve})
is violated while the upper bound is still valid (refer Fig.~\ref{fig:223224C12Neg12R12}).
In other words for rank-3 and rank-4 the inequality $N_{12}\leq R_{12}$ holds. Here too it can be seen that 
there is a lower bound on the spread of the states i.e. for given value of $N_{12}$ there seems to be a  
maximum value taken by $R_{12}$ or for given value of $R_{12}$ there is the minimum value taken by $N_{12}$.

\subsubsection{States in the lower boundary of the rank-3 case}
For the case of rank-$3$ states it is found that neither the ansatz-state-I given in Eq.~(\ref{Eq:AnsatzRank3}) (which gave the
lower boundary for the concurrence {\it vs}  $R_{12}$) nor the MEMS~II define the lower boundary of 
negativity {\it vs}  $R_{12}$. Based on the fact that the spectral decomposition of MEMS~I states which form the lower
boundary of rank-2 states are mixtures of a pure separable state and an orthogonal maximally entangled state (Eq.~(\ref{Eq:MEMSIspect}))
the following  generalization (``anstaz-state-II") is examined:

\beq
\label{Eq:AnsatzRank3N12R12}
\rho_{12}=\alpha|01\rangle\langle01|+\beta|\phi^{+}\rangle\langle\phi^{+}|+\gamma|\phi^{-}\rangle\langle\phi^{-}|
\eeq
where $0\leq\alpha,\beta,\gamma\leq 1$ and $\alpha+\beta+\gamma=1$.
It is readily verified that for these states
\beq
R_{12}=\sqrt{|\beta^2-\gamma^2|},\;\;N_{12}=\sqrt{\alpha^2+(\beta-\gamma)^2}-\alpha.
\label{eq:R12N12rank3negR}
\eeq

The coefficients $\alpha$, $\beta$ and $\gamma=1-\alpha-\beta$ are now fixed such that for a given value of $R_{12}$ 
the minimum value of negativity is obtained.
Using the method of Lagrange multipliers 
with $\alpha$ as an independent parameter this minimization fixes the other coefficients as
\beq
\begin{split}
\beta&=\dfrac{1}{2}(1-\alpha +\sqrt{(1-\alpha)(1-3\alpha)}),\\ 
\gamma&=\dfrac{1}{2}(1-\alpha -\sqrt{(1-\alpha)(1-3\alpha)}).
\end{split}
\eeq
It should be noted that $ 0\leq \alpha \leq 1/3$ to have valid values of $\beta$ and $\gamma$.
With these values of the coefficients, the state in Eq.~(\ref{Eq:AnsatzRank3N12R12}) is the ansatz-state-II 
for which using Eq.~(\ref{eq:R12N12rank3negR}) one gets
\beq 
R_{12}=(1-\alpha)^{3/4}(1-3\alpha)^{1/4},\;\; N_{12}=1-3\alpha.
\eeq
Hence the corresponding negativity {\it vs} $R_{12}$ curve is
\beq
\label{Eq:N12R12CurveRank3}
R_{12}=N_{12}^{1/4} \left(\dfrac{2+N_{12}}{3}\right)^{3/4},
\eeq
which is indeed found to be the lower boundary in Fig.~(\ref{fig:223224C12Neg12R12}).

As for any two qubit state $N_{12}=0$ if and only if $C_{12}=0$ it follows that the ansatz-state-I given in Eq.~(\ref{Eq:AnsatzRank3}),
with $0 \le p \le 1/2$, belong to the horizontal segment with $N_{12}=0$ in Fig.~\ref{fig:223224C12Neg12R12}. As in that case
the maximum value of $R_{12}$ for rank-3 states with $N_{12}=0$ is $(1/3)^{3/4}$. To summarize the regions 
\beq
\label{Eq:R12N12Rank3Ineq}
\begin{split}
&N_{12} < R_{12} \leq N_{12}^{1/4} \left(\dfrac{2+N_{12}}{3}\right)^{3/4} \;\; \mbox{for}\;\; N_{12} > 0, \\ &\mbox{and}\;\; 0\leq R_{12} \leq (1/3)^{3/4} \;\; \mbox{for}\;\; N_{12} = 0
\end{split}
\eeq 
are populated for rank-3 states. It is checked numerically (results not presented) that perturbing the boundary states 
always results in values of negativity and $R_{12}$ that lie in the interior of this region.
Thus although the properties of MEMS~I motivated the form of the ansatz- state-II for non-zero negativity in the 
rank-3 case, it should be noted that this is different from the rank-3 MEMS~II 
\cite{SatoshiIshizaka2000,Verstraete2001,WJMunro01,TzuChiehWei03,NicholasPeters04}.

\begin{figure}[t!]
\begin{center}     
\includegraphics[scale=0.35]{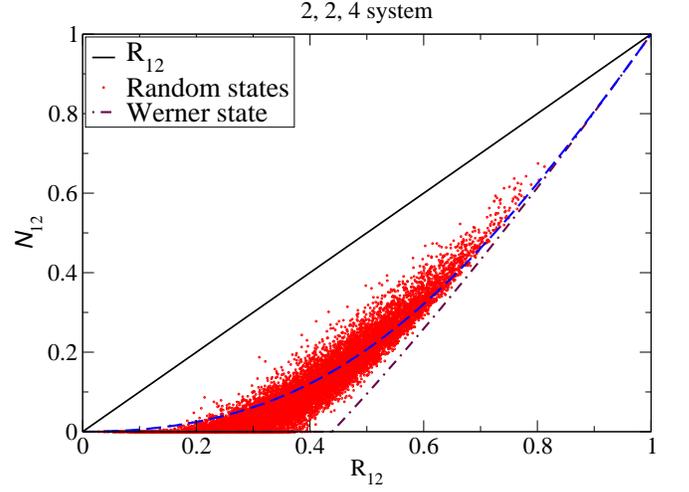} 
\caption{(Color online) 
{The $R_{12}$ and the negativity between two qubits $N_{12}$ having density matrix 
$\rho_{12}$ of rank-$4$. Also shown are various analytical curves. 
The dashed curve in the figure corresponds to the lower boundary in Eq.~(\ref{eq:N12R12Curve}).
The Werner state curve in the bottom figure corresponds to Eq.~(\ref{Eq:R12N12Rank4Ineq}).
Here, $20000$ random tripartite complex pure states with respective dimension of the third 
subsystem are used.
}} 
\label{fig:WernerNeg12R12} 
\end{center}
\end{figure}

\subsubsection{Werner states are in the lower boundary of rank-4 cases}

Evidence is now presented that Werner states form the lower boundary of full-rank states.
As the Werner state is a Bell-diagonal state, $C_{12}=N_{12}$ \cite{Verstraete2001,wootters98},  and it follows on using Eq.~(\ref{Eq:R12C12Rank4Ineq}) that
\beq
\label{Eq:R12N12Rank4Ineq}
N_{12}=\mbox{max}\left\{0,\dfrac{3 R_{12}^{4/3}-1}{2}\right\}.
\eeq
It can be seen that $N_{12}$ is zero for $0\leq R_{12} \leq (1/3)^{3/4}$ which is also the case of
rank-3 border states (refer Eq.~(\ref{Eq:R12N12Rank3Ineq})). It is greater than zero for 
$(1/3)^{3/4}<R_{12}\leq 1$, in which case $N_{12}=(3 R_{12}^{4/3}-1)/2$ or equivalently 
$R_{12}=\left(\left(2 N_{12}+1\right)/3\right)^{3/4}$. This curve is shown in Fig.~\ref{fig:WernerNeg12R12} along with
results from rank-4 matrices derived from a Haar sampling of pure states in $(2,2,4)$ dimensions.

To check that Eq.~(\ref{Eq:R12N12Rank4Ineq}) indeed forms the lower boundary for rank-4 case in 
Fig.~\ref{fig:WernerNeg12R12} the same method from Sec.~\ref{sec:WstateC12method} is employed. Results not presented here then confirm
that the Werner states lie on the boundary of the  $R_{12}$ {\it vs} negativity region and define the extreme curve within which all states lie.
To summarize, for all two-qubit states, including the rank-4 case 
\beq
\label{Eq:R12N12Rank4Ineq10}
N_{12} \leq R_{12} \leq \left(\dfrac{2 N_{12}+1 }{3}\right)^{3/4} \;\; \mbox{for}\;\; N_{12} \geq 0.
\eeq
In Fig.~\ref{fig:AnalyticalCurves2} the boundary curves corresponding to all the ranks of two-qubit density 
matrices are shown. It includes various analytical curves from Eqs.~(\ref{eq:N12R12Curve}), 
(\ref{Eq:N12R12CurveRank3}), (\ref{Eq:R12N12Rank4Ineq}).
The curve corresponding to rank-1 is common to all the ranks. It can be seen that as the rank increases the 
corresponding lower boundaries get lower. 
 {If a two-qubit state is separable then $N_{12}=C_{12}=0$, and in this case the maximum value 
of $R_{12}$ is $(1/3)^{3/4} \approx 0.4387$.}
In other words if for a two-qubit state $R_{12}> (1/3)^{3/4}$, it is necessarily entangled.
{If this criterion is applied to, for example, Werner state it gives clear separable-entangled regions.}

It should be noted that throughout this work possible classes of states with respective boundaries are given and
verified with various numerical and analytical methods
but apart from these there could be other classes of states on the boundaries.

\begin{figure}[t!]
\begin{center}     
\includegraphics[scale=0.35]{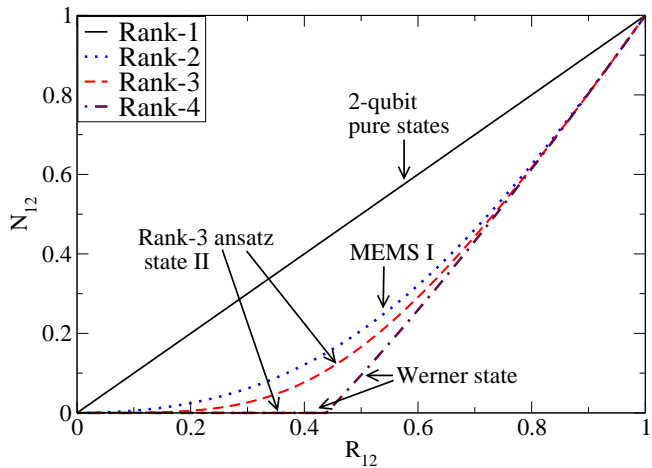}
\caption{(Color online) Boundary curves for $R_{12}$ and the entanglement between two qubits $N_{12}$ having 
density matrix $\rho_{12}$ of all the ranks as per given in Eqs.~(\ref{eq:N12R12Curve}), (\ref{Eq:R12N12Rank3Ineq}) 
and (\ref{Eq:R12N12Rank4Ineq10}). Also shown are various classes of states lying on the respective boundaries.
} 
\label{fig:AnalyticalCurves2} 
\end{center}
\end{figure}

}

\section{Summary and discussions}
\label{sec:Conclusion}

In this paper correlations in bipartite density matrices on ${\cal H}^d \otimes {\cal H}^d$ were studied 
using the quantity $R_{12}~=~d {|\det[\mathcal{R}(\rho_{12}^{T_2})]|}^{1/d^2}$, where 
$\mathcal{R}(\rho_{12}^{T_2})$ is formed by the combined operations of partial transpose and 
realignment on the density matrix. It is based on a simple permutation of the density matrix
and involves no extremization, or even diagonalization. It is proved that $0\leq R_{12} \leq 1$.
Several examples show how they vanish on large classes of separable states including classical-quantum
correlated states, while being maximum ($=1$) on maximally entangled states.
{It is also shown that $R_{12}$ is an entanglement monotone only on two-qubit pure states since
in this case it equals concurrence as well as negativity.
These properties are reminiscent of quantum discord.}
Two qubit density matrices were studied in detail to motivate that this measure captures entanglement in 
the bipartite state along with other multiparty entanglement that maybe present in the purification of such states. 

 In the case of density matrices of rank-2 their purification in terms of  three-qubit states is possible. 
Extensive use of the canonical form of three-qubit pure state is made to make simple connections between 
$R_{12}$, the concurrence, and the tripartite measure of 3-tangle. When the density matrix is of rank-2 
analytical results on the lower and upper bounds of the concurrence in terms of $R_{12}$ are obtained. 
States satisfying the bounds are found to be special, one being the well-known $W$ states, and the other (see comments below)  referred to here as maximally 
3-tangled states ($M3TS$) as they have the property of maximizing the tripartite entanglement for a given 
entanglement {(concurrence)} between two qubits. It is found that this maximization leads to zero entanglement in the other two pairs, reflecting the monogamy of entanglement.

Interestingly, if the entanglement between two pairs of qubits in a tripartite pure state are kept fixed, then the maximization of the 3-tangle does not lead to zero entanglement between the second and the third qubit and leads to a generalization of the $M3TS$. In the case of two-qubit density matrices of ranks three and four, strong evidence is provided that $R_{12}\geq C_{12}$. The physically important subset of $X$ states, which are of rank four in general,
is considered where this is explicitly proved. Upper bounds on $R_{12}$ and the class of states satisfying these bounds are given for all higher ranks as well. As in the case of rank-2 the boundary states are Bell-diagonal (whose purification is the $M3TS$) in the case of rank-3 also it is a Bell-diagonal state, which has been identified and verified numerically. For the rank-4 states the border states are Werner states, which are also special cases of Bell-diagonal states. This the $R_{12}$ {\it vs} concurrence ``phase diagram" is an interesting one which has many special states at the boundaries separating states by rank.

Apart from concurrence another important measure namely the negativity has been compared with $R_{12}$ for 
two-qubit density matrices. Motivations are that negativity, unlike concurrence, can be defined in arbitrary dimensional 
systems, and it is also derived from the operation of partial transpose. Upper and lower bounds on $R_{12}$ in terms of 
negativity and the class of states satisfying these bounds are given for all the ranks. In the case of rank-2 the state is given by MEMS~I, for rank-3 the state is a mixture of two Bell states and a separable pure state orthogonal 
to both of them. While in the case of rank-4 the ansatz state is again the Werner state.
Strong evidences are provided in support of these boundaries and the ansatz states satisfying them.

Since the appearance of the first version of this work, the authors have come to know of various related 
aspects of the central quantity $R_{12}$ and the states $M3TS$. 
In the case of two-qubit case the inequality $R_{12}\geq C_{12}$ is established analytically 
\cite{AntonyMilne2014,AntonyMilne2014BB}. The $M3TS$ states have 
been studied as ``maximal slice" \cite{Carteret00} states and have been shown to violate maximally, for a given 
tangle, a tripartite nonlocality inequality, namely the  Svetlichny inequality 
\cite{Svetlichny1987,RungtaTripartite09}. However the rather simple and natural way in which it appears on 
maximizing the three tangle for a given concurrence (or $R_{12}$) is to our knowledge new.

Thus there is a significance associated with most of the boundary states and hence makes $R_{12}$ an interesting quantity for more detailed investigation. In particular, every two-qubit pure state violates Bell's inequality \cite{bell}, the M3TS violates the Svetlichny inequality, $W$ class of states have zero three-tangle \cite{Vidal00}, Werner states have maximum negativity for given linear entropy, while MEMS~I have maximum concurrence for given linear entropy 
\cite{SatoshiIshizaka2000,Verstraete2001,WJMunro01,TzuChiehWei03,NicholasPeters04}. It will be interesting to 
investigate the significance of the ansatz states I and II. Considering other spectral quantities than the determinant of ${\cal P}(12)$ is possible. More detailed studies and interpretation of such quantities is of interest, and it is hoped that the present work provides sufficient reasons and motivations.

\begin{acknowledgments}
We are very grateful to Antony Milne for bringing our attention to the relationship of $R_{12}$ to ``obesity",
and to Shohini Ghose for pointing out that the $M3TS$ states have previously been studied as the maximal 
slice states. We are also grateful to Karol \.Z{}yczkowski for comments about higher rank boundary states, and 
thank an anonymous referee for suggesting the comparison of negativity and $R_{12}$. UTB is happy to acknowledge
many discussions with M. S. Santhanam, T. S. Mahesh, Sudheer, Swathi, Abhishek, Harshini, Sanku, Deepak, 
Soham, and Anjusha. 
\end{acknowledgments}

\bibliography{reference22013}
\end{document}